\renewcommand*\env@matrix[1][\arraystretch]{%
  \edef\arraystretch{#1}%
  \hskip -\arraycolsep
  \let\@ifnextchar\new@ifnextchar
  \array{*\c@MaxMatrixCols c}}
\newcommand*{\QEDB}{\hfill\ensuremath{\square}}%
\newcommand{\Ps}{\mathscr{P}}
\declaretheoremstyle[
spaceabove=8pt, spacebelow=8pt,
headfont=\normalfont\bfseries,
notefont=\normalfont, notebraces={(}{)},
headpunct={:},
bodyfont=\normalfont,
postheadspace=0.5em,
]{mynote}
\declaretheorem[style=mynote]{Remark}
\declaretheorem[style=mynote,name=Theorem]{Thm}
\declaretheorem[style=mynote,name=Lemma]{Lem}
\declaretheorem[style=mynote,name=Algorithm]{Algno}
\newcommand{\paperTitle}{Efficient Entanglement Distillation for Quantum Channels\\ with Polarization Mode Dispersion}
\begin{document}
{\color{white} 
\fontsize{0pt}{0pt}\selectfont
\begin{acronym}
\acro{SVD}{singular value decomposition}\vspace{-16mm}
\acro{LOCC}{local operations and classical communication}\vspace{-16mm}
\acro{QED}{quantum entanglement distillation}\vspace{-16mm}
\acro{QEC}{quantum error correction}\vspace{-16mm}
\acro{w.r.t.}{with respect to}\vspace{-16mm}
\acro{FP}{fidelity-prioritized}\vspace{-16mm}
\acro{PP}{probability-prioritized}\vspace{-16mm}
\acro{QPA}{quantum privacy amplification}\vspace{-16mm}
\acro{TKO}{two-Kraus-operator}\vspace{-16mm}
\acro{PMD}{polarization mode dispersion}\vspace{-16mm}
\acro{PSP}{principal states of polarization}\vspace{-16mm}
\acro{DGD}{differential group delay}\vspace{-16mm}
\acro{DFS}{decoherence-free subspace}\vspace{-16mm}
\acro{CW}{continuous-wave}\vspace{-16mm}
\end{acronym}}


\setcounter{page}{1}
\title{\paperTitle}

\author{Liangzhong~Ruan, Moe~Z.~Win}
\affiliation{Massachusetts Institute of Technology, Cambridge, MA}
\author{\vspace{-2mm} Brian T. Kirby, and Michael Brodsky}
\affiliation{U.S. Army Research Laboratory, Adelphi, MD}



\date{\today}

\begin{abstract}
Quantum entanglement shared by remote network nodes serves as a valuable resource for promising applications in distributed computing, cryptography, and sensing. 
However, distributing entanglement with high-quality via fiber optic routes could be challenging due to the various decoherence mechanisms in fibers. 
In particular, one of the primary polarization decoherence mechanisms in optical fibers is \ac{PMD}, which is the distortion of optical pulses by randomly varying birefringences in the system.
To  mitigate the effect of decoherence in entangled particles, \ac{QED} algorithms have been proposed.
One particular class, the recurrence \ac{QED} algorithms, stands out because it has relatively relaxed requirements on both the size of the quantum circuits involved and on the initial quality of entanglement in particles. 
However, because the number of particles required grows exponentially with the number of rounds of distillation, an efficient recurrence algorithm needs to converge quickly.
We present a recurrence \ac{QED} algorithm designed for photonic qubit pairs affected by \ac{PMD}-degraded channels.
Our proposed algorithm achieves the optimal fidelity 
as well as the optimal success probability (conditional on the optimal fidelity being achieved) in every round of distillation.  
The attainment of optimal fidelity improves the convergence speed of fidelity with respect to the rounds of distillation from linear to quadratic, and hence significantly reduces the number of distillation rounds.
Combined with the fact that the optimal success probability is achieved, 
the proposed algorithm provides an efficient method to distribute entangled states with high fidelity via optic fibers.
\end{abstract}

\pacs{03.67.Ac, 03.67.Hk}

\maketitle

\acresetall             

\section{Introduction}

Applications of quantum information protocols, such as teleportation \cite{BenBraCreJozPerWoo:93,NieKniLaf:98,GotChu:99},  dense coding \cite{BenSte:92,WanDenLiLiuLon:05,BarWeiKwi:08}, entanglement-assisted quantum key distribution \cite{Eke:91,KoaPre:03,GotLoLutPre:04}, and quantum repeaters \cite{DurBriCirZol:99,SanSimRieGis:11,KirSanMalBro:16}, rely on the ability of distributing quantum entanglement among distant network nodes, a task for which the fiber-optic infrastructure is a natural candidate.
In the context of delivering entanglement, 
polarization-entangled photon pairs are particularly useful because of the ease with which light polarization can be manipulated using standard instrumentation \cite{PopFedUrsBohLorMauPeevSudKurWei:OE04} and the numerous sources of polarization-entangled photons suitable for use with standard fibers \cite{WanKan:IEEE09}. 
 For polarization-entangled photons, the major decoherence mechanism is birefringence \cite{AntShtBro:PRL11,ShtAntBro:OE11,BroGeoAntSht:OL11}. 
 The accumulation of randomly varying birefringence in fibers leads to a phenomenon known as \ac{PMD} \cite{GorKog:PNAS00}.
 
In the literature, the \ac{PMD} effect is often modeled using the first-order approximation \cite{AntShtBro:PRL11,ShtAntBro:OE11}.
As illustrated in Fig.~\ref{fig_PMD}, with this approximation, the \ac{PMD} effect is parametrized by \ac{PSP} and \ac{DGD}, both of which vary stochastically in time.  
However, since typical time constants characterizing the decorrelation of \ac{PMD} in buried optical fibers are as long as hours, days and sometimes months \cite{BroFriBorTur:JLT06}, \ac{PMD} evolution can be considered adiabatic in the context of quantum communications protocols.
Therefore, it is reasonable to assume that the parameters of the \ac{PMD} effect, particularly \ac{PSP}, can be measured by the network nodes.

\begin{figure*}[t] \centering
\includegraphics[scale=0.45]{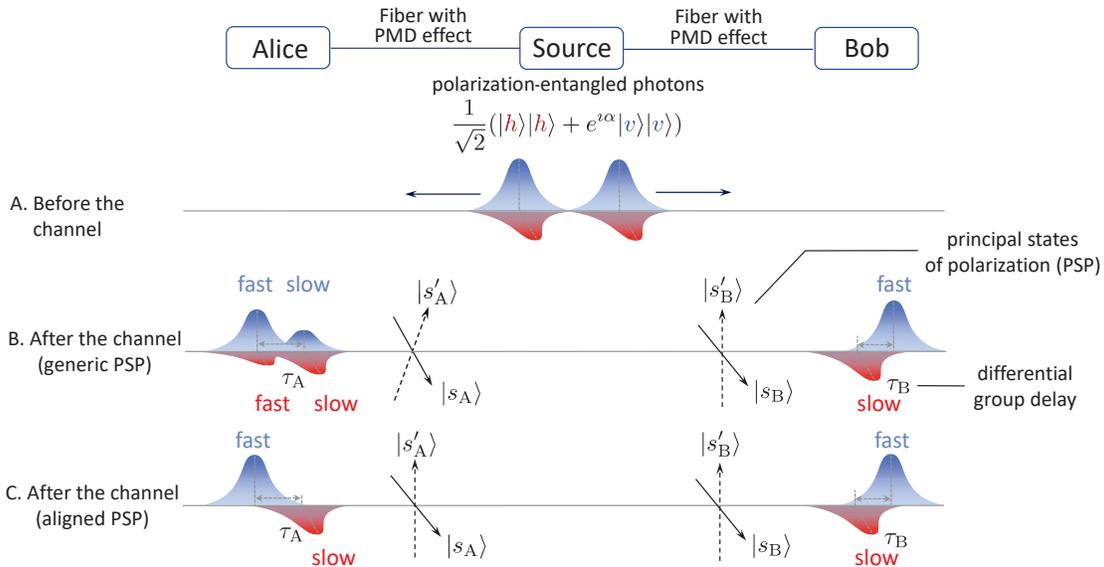}
\caption {System Model. With first-order approximation, the overall effect of \ac{PMD} resembles that of pure birefringence in the sense that it causes an incident pulse to split into two orthogonally polarized components delayed relative to each other \cite{GorKog:PNAS00}. 
The polarization states of these two components are known as the \ac{PSP} and the delay between them is called the \ac{DGD}.
Appendix~\ref{pf_align} shows that even with generic \ac{PSP}, a maximally entangled polarization state
 prepared by the source can be viewed as if the polarization basis of one of the photons is already aligned with the \ac{PSP} basis of the channel.
 Hence, in this figure, the polarization basis of photon B is always aligned with the \ac{PSP} basis of the channel.}
\label{fig_PMD}
\end{figure*}

As illustrated in Fig.~\ref{fig_PMD}A, to deliver entanglement to remote network nodes, Alice and Bob, the source locally generates a maximally entangled photon pair and respectively sends the two photons to the two nodes.
However, the decoherence effect of the channel deteriorates the entanglement during the transmission.
To address this problem, \ac{QED} algorithms \cite{BenBraPopSchSmoWoo:96,DeuEkeJozMacPopSan:96,BriDurCirZol:98,OpaKur:99,MunOrs:09,DehVanDeMVer:03,VolVer:05,HosDehDeM:06,BenDivSmoWoo:96,Mat:03,AmbGot:06,WatMatUye:06,Rozetal:18} have been proposed to generate qubit pairs in the targeted entangled state using \ac{LOCC}.
Since high-quality entanglement is the keystone in many  important applications of quantum computation and quantum information, \ac{QED} has become an essential building block for the development of quantum networks \cite{NicFitBenSim:14,Komar:14,Kaletal:17}.

Three types of \ac{QED} algorithms have been proposed in the literature, namely, asymptotic  \cite{DehVanDeMVer:03,VolVer:05,HosDehDeM:06}, code-based \cite{Mat:03,AmbGot:06,WatMatUye:06}, and recurrence algorithms \cite{BenBraPopSchSmoWoo:96,DeuEkeJozMacPopSan:96,BriDurCirZol:98,OpaKur:99,MunOrs:09}. 
Among the three types of algorithms, the recurrence ones require local operations on just one or two qubits, and are robust against severe decoherence.
The recurrence algorithms  operate on two qubit pairs each time, improving the quality of entanglement in one pair at the expense of the other pair, which is then discarded.
The algorithms keep repeating this operation to progressively increase the fidelity of the kept qubit pairs \ac{w.r.t.} the targeted entangled state.
These algorithms can mitigate the effect of stronger decoherence by performing more rounds of distillations.
In fact, the recurrence algorithm proposed in  \cite{BenBraPopSchSmoWoo:96} can distill contaminated qubit pairs into
maximally entangled qubit pairs as long as the initial fidelity of the contaminated qubit pairs \ac{w.r.t.} the targeted state is greater than $0.5$.
In \cite{HorHor:96,HorHorHor:97}, it has been proven that a state of qubit pairs is distillable if and only if its fidelity \ac{w.r.t.}  a certain maximally entangled state is greater than $0.5$.
To summarize, recurrence algorithms are preferable in terms of both implementability and robustness.


Despite their advantages, recurrence algorithms do have a drawback in terms of efficiency.
The efficiency of \ac{QED} algorithms is measured in terms of \emph{yield}, which is defined as the ratio between the number of highly entangled output qubit pairs and the number of input qubit pairs impaired by decoherence effects.
Since at least half of the entangled qubit pairs are discarded in each round of distillation, the efficiency of the recurrence algorithms decreases exponentially with the number of rounds.
To reduce the required rounds of distillation, one needs to design the \ac{LOCC} adopted in the algorithms so that the fidelity of the kept qubit pairs quickly approaches 1 \ac{w.r.t.} the rounds of distillation.
To achieve this objective, the \ac{QPA} algorithm was proposed in \cite{DeuEkeJozMacPopSan:96}, and was shown numerically to require fewer rounds of distillation than the algorithm in \cite{BenBraPopSchSmoWoo:96} for qubit pairs impaired by a quantum depolarizing channel.
However, performance of the \ac{QPA} algorithm was not characterized analytically.
In fact, a set of initial states was found in \cite{OpaKur:99} for which the \ac{QPA} algorithm was less efficient than the algorithm in \cite{BenBraPopSchSmoWoo:96}.
In \cite{OpaKur:99}, the design of distillation operations was formulated into an optimization problem,
which was inherently non-convex, and consequently, the optimal solution was not found.
Therefore, the issue of improving the efficiency of recurrence \ac{QED} algorithms remains an interesting challenge.

In this work, we report an efficient recurrence \ac{QED} algorithm for entangled photons impaired by the \ac{PMD} effect.
We envision that a key enabler for designing efficient recurrence \ac{QED} algorithms is to make them adaptive to the key parameters of \ac{PMD}.
Intuitively, compared to general algorithms,
\ac{QED} algorithms that adapt to channel-specific decoherence effects will better mitigate such effects and hence distill more efficiently.
In fact, it has been observed that knowing the channel benefits the performance of quantum error recovery \cite{FleShoWin:J07}, and channel-adaptive \ac{QEC} schemes that outperform classical ones \cite{FleShoWin:J08,FleShoWin:J08a} have been designed.
In the following, we will first analyze the effect of \ac{PMD} on photon pairs affected by \ac{PMD}-degraded channels,
then characterize the optimal fidelity and the optimal success probability that can be achieved via \ac{LOCC} in each round of distillation,
and finally design an algorithm to achieve the optimal fidelity and success probability.
By achieving the optimal fidelity and success probability, the proposed algorithm provides an efficient method to distribute entangled photons with high fidelity through quantum channels impaired by fiber birefringence.

\begin{widetext}
\begin{align}
&\V{\rho}=\frac{1}{2}\left[
\begin{array}{c@{\;}c@{\;}c@{\;}c}
 \vert \eta_{1}\vert^{2} & -\eta_{1}\eta_{2}e^{-\imath\alpha}R^{\dag}(\tau_{\mathrm{A}},0) & \eta_{1}\eta^{\dag}_{2}R^{\dag}(0,\tau_{\mathrm{B}}) & \eta^{2}_{1}e^{-\imath\alpha}R^{\dag}(\tau_{\mathrm{A}},\tau_{\mathrm{B}}) \\
  -\eta_{1}^{\dag}\eta_{2}^{\dag}e^{\imath\alpha}R(\tau_{\mathrm{A}},0)  & \vert \eta_{2}\vert^{2} &  
  -(\eta^{*}_{2})^{2}e^{\imath\alpha}R(\tau_{\mathrm{A}},-\tau_{\mathrm{B}})  
   &  -\eta_{1}\eta_{2}^{\dag}R^{\dag}(0,\tau_{\mathrm{B}})  \\
  \eta_{1}^{\dag}\eta_{2}R(0,\tau_{\mathrm{B}})  &  
    -(\eta_{2})^{2}e^{-\imath\alpha}R^{\dag}(\tau_{\mathrm{A}},-\tau_{\mathrm{B}})
  & \vert \eta_{2}\vert^{2} &  \eta_{1}\eta_{2}e^{-\imath\alpha}R^{\dag}(\tau_{\mathrm{A}},0)  \\
 (\eta_{1}^{\dag})^{2}e^{\imath\alpha}R(\tau_{\mathrm{A}},\tau_{\mathrm{B}})  &  -\eta_{1}^{\dag}\eta_{2}R(0,\tau_{\mathrm{B}})  & \eta_{1}^{\dag}\eta_{2}^{\dag}e^{\imath\alpha}R(\tau_{\mathrm{A}},0)  & \vert \eta_{1}\vert^{2} \\
\end{array}
\right]
\label{eqn:Gen_Density}
\end{align}
\end{widetext}

\noindent{\bf Organization}: 
Section~\ref{sec:system_prob} presents the system model and defines the optimization problems for recurrence \ac{QED} algorithms. 
Section~\ref{sec:upper_alg} characterizes the optimal performances of the defined problems and then designs a recurrence \ac{QED} algorithm that achieves the characterized  optimal performance.
Section~\ref{sec:numerical} provides several numerical tests for the proposed algorithm.
Finally, Section~\ref{sec:conclusion} gives the conclusion.

\noindent{\bf Notations:} $a$, $\V{a}$, and $\M{A}$ represent scalar, vector, and matrices, respectively.
$\mathrm{pha}\{\cdot\}$ denotes the phase of a complex number.
$(\cdot)^\dag$, $\mathrm{rank}\{\cdot\}$, $\det\{\cdot\}$  and $\mathrm{tr}\{\cdot\}$, denote the Hermitian transpose, rank, determinant, and trace of a matrix, respectively. $\mathrm{tr}_{i,j}\{\cdot\}$ denotes the partial trace \ac{w.r.t.} to the $i$-th and $j$-th qubits in the system. 
$\propto$ denotes the proportional relationship.
$\mathbb{I}_n$ denotes the $n\times n$ identity matrix, and $\imath$ is the unit imaginary number.

\section{System Model and Problem Formulation}
\label{sec:system_prob}
This section presents the system model and then defines the optimization problems for recurrence \ac{QED} algorithms.
\subsection{Effect of \ac{PMD} on entangled photon pairs}
\label{sec:system}
Consider a quantum network illustrated in Fig.~\ref{fig_PMD}, in which a photon source is connected to two network nodes, i.e., Alice and Bob, via  \ac{PMD}-degraded optical fibers.
The \ac{PMD} effect in the two fibers is parametrized by the \ac{PSP} basis $\{|s_i\rangle, |s'_i\rangle\}$ and differential group delay $\tau_i$, where 
$i\in\{\mathrm{A}, \mathrm{B}\}$ is the node index.
The source prepares a pair of  polarization-entangled photon pairs in maximally entangled states, and sends one photon to each network node.
Due to the effect of \ac{PMD}, the density matrix of the photon pair after passing through fibers is given by  \eqref{eqn:Gen_Density}.
The density matrix $\V{\rho}$ is written in the basis of $\vert s_{\mathrm{A}} s_{\mathrm{B}} \rangle$, $\vert s_{\mathrm{A}} s'_{\mathrm{B}} \rangle$, $\vert s'_{\mathrm{A}} s_{\mathrm{B}} \rangle$, and $\vert s'_{\mathrm{A}} s'_{\mathrm{B}} \rangle$.
Please refer to Appendix~\ref{sec:PMD} for the detailed derivation and the definition of the parameters in  \eqref{eqn:Gen_Density}.
Denote the element in the $p$-th row and $q$-th column of $\V{\rho}$ as $\rho_{pq}$.



As illustrated in Fig.~\ref{fig_PMD}B and \eqref{eqn:Gen_PMD}, with generic \ac{PSP}, 
the \ac{PMD} effect in the two arms leads to four possible coincident arrival times for the two photons, i.e., slow-slow ($\vert s_{\mathrm{A}} s_{\mathrm{B}} \rangle$), slow-fast ($\vert s_{\mathrm{A}} s'_{\mathrm{B}} \rangle$), fast-slow ($\vert s'_{\mathrm{A}} s_{\mathrm{B}} \rangle$), and fast-fast ($\vert s'_{\mathrm{A}} s'_{\mathrm{B}} \rangle$).
This results in a relatively complicated density matrix.
As illustrated in Fig.~\ref{fig_PMD}C, to simplify the density matrix, one could align the \ac{PSP} basis with the photon polarization basis, so that there are only two possible coincident arrival times, i.e., slow-slow and fast-fast.
The physical realization of this operation requires a measurement of the \ac{PSP} for a given fiber and the ability to perform local rotation on the photons before passing through the fiber.
As Appendix~\ref{pf_align} shows, local rotation on one of the photons is sufficient to achieve
the alignment of the \ac{PSP} basis with the photon polarization basis.
Existing studies suggest realignment of these states would be rare, as the \ac{PSP} in installed fiber optics can remain unchanged for as long as months \cite{BroFriBorTur:JLT06}.
In fact, the operation of aligning \ac{PSP} has also been adopted in the algorithm design for \ac{PMD} compensation \cite{ShtAntBro:OE11} to exploit the advantage of having a \ac{DFS} \cite{AntShtBro:PRL11}.

When the \ac{PSP} basis is aligned with the polarization basis, $\eta_{1}=1$ and $\eta_{2}=0$.
Hence, the density matrix \eqref{eqn:Gen_Density} is simplified to a matrix  with four non-zero elements, which are given by
\begin{align*}
\rho_{11}&=\rho_{44}=\frac{1}{2},\\
\rho_{41}&=\rho_{14}^{\dag}=\frac{1}{2}e^{\imath\alpha}R(\tau_{\mathrm{A}},\tau_{\mathrm{B}})
\end{align*}
which can be rewritten as
\begin{align}
\V{\rho}=&\frac{1}{2}\big(|s_{\mathrm{A}}s_{\mathrm{B}}\rangle\langle s_{\mathrm{A}}s_{\mathrm{B}}| + e^{-\imath\alpha}R^\dag(\tau_{\mathrm{A}},\tau_{\mathrm{B}})|s_{\mathrm{A}}s_{\mathrm{B}}\rangle\langle s'_{\mathrm{A}}s'_{\mathrm{B}}|\nonumber\\ 
&+ e^{\imath\alpha}R(\tau_{\mathrm{A}},\tau_{\mathrm{B}})|s'_{\mathrm{A}}s'_{\mathrm{B}}\rangle\langle s_{\mathrm{A}}s_{\mathrm{B}}| + |s'_{\mathrm{A}}s'_{\mathrm{B}}\rangle\langle s'_{\mathrm{A}}s'_{\mathrm{B}}|\big).\label{eqn:intialDM_2}
\end{align}

\subsection{Problem formulation}
The network nodes Alice and Bob adopt a recurrence \ac{QED} algorithm to remove the effect of \ac{PMD}.
They operate separately on every two qubit pairs, trying to improve the quality of entanglement in one pair at the expense of the other pair.
This distillation operation $\mathcal{D}$ can be formulated as follows.
Denote the density matrix of a kept qubit pair after $k$-th round of distillation as $\V{\rho}_k$, with $\V{\rho}_0=\V{\rho}$.
Then before the $k$-th round of distillation, the joint density matrix of two kept qubit pairs is given by
\begin{align*}\V{\rho}^{\mathrm{J}}_{k-1}={\V{\rho}_{k-1}}\otimes{\V{\rho}_{k-1}}
\end{align*}

Without loss of generality, assume that the network nodes try to keep the first qubit pair, i.e., the first and second qubits in the system.
Then the density matrix of the first qubit pair after the distillation operation is given by the partial trace over the third and fourth qubits normalized by the overall trace of the density matrix, i.e.,
\begin{align}
\V{\rho}_{k}=\frac{\mathrm{tr}_{3,4}\{\mathcal{D}\{\V{\rho}^{\mathrm{J}}_{k-1}\}\}}
{\mathrm{tr}\{\mathcal{D}\{\V{\rho}^{\mathrm{J}}_{k-1}\}\}}\label{eqn:rho_1}
\end{align}
where the distillation operation $\mathcal{D}$ must be in the category of \ac{LOCC}, and the probability of successfully keeping the first qubit pair is given by
\begin{align}
P_{k}=\mathrm{tr}\{\mathcal{D}\{\V{\rho}^{\mathrm{J}}_{k-1}\}\}.\label{eqn:Pk}
\end{align}
Denote the fidelity of the kept qubit pairs after the $k$-th round of distillation \ac{w.r.t.} to the targeted state as
\begin{align}
F_k = \langle \Phi^+| \V{\rho}_k |\Phi^+\rangle\label{eqn:Fk}
\end{align}
where $|\Phi^+\rangle=\frac{1}{\sqrt{2}}(|h_{\mathrm{A}}h_{\mathrm{B}}\rangle+|v_{\mathrm{A}}v_{\mathrm{B}}\rangle)$.
For notation convenience, denote the mapping between the input density matrix $\V{\rho}_{k-1}$ and the fidelity of the kept qubit pair $F_k$ as $F_{\mathcal{D}}$, i.e., 
\begin{align*}
F_k = F_{\mathcal{D}}(\V{\rho}_{k-1})
\end{align*}
and denote the mapping between the input density matrix $\V{\rho}_{k-1}$ and the success probability $P_k$ as 
$P_{\mathcal{D}}$, i.e., 
\begin{align*}
P_k = P_{\mathcal{D}}(\V{\rho}_{k-1}).
\end{align*}
Note that both mappings depend on the distillation operation $\mathcal{D}$.

The objective of recurrence \ac{QED} algorithms is to generate qubit pairs with sufficiently high fidelity, i.e.,
\begin{align}
F_K \ge 1- \epsilon\label{eqn:objective}
\end{align}
for some natural number $K$ and small $\epsilon>0$.  
With this recurrence \ac{QED} algorithm, the yield of the algorithm after $K$ rounds of distillation is given by
\begin{align}
Y_K = \prod_{k=1}^{K}\frac{P_k}{2}\label{eqn:yield}
\end{align}
It can be seen from \eqref{eqn:yield} that the yield of the algorithm drops by at least half with one more round of distillation.
Hence, to improve the yield of the \ac{QED} algorithm, a primary task is to minimize the required rounds of distillation, i.e., 
maximize $F_k$.
Meanwhile, the success probability $P_k$ also affects $Y_K$.
Hence, a secondary task is to maximize $P_k$ conditional on $F_k$ being maximized.
The problems of fulfilling these two tasks are formulated as follows.


In a certain round of distillation, given the input density matrix $\V{\rho}$, 
we will maximize the fidelity of the kept qubit pair $F_{\mathcal{D}}(\V{\rho})$ \ac{w.r.t.} the distillation operation ${\mathcal{D}}$.
This problem can be formulated as 
\begin{align*}
\Ps_{F}:&\quad \max_{\mathcal{D}}F_{\mathcal{D}}(\V{\rho})
\end{align*}
Denote the optimal fidelity as $F^*(\V{\rho})$.
We will maximize the success probability of the distillation operation $P_{\mathcal{D}}(\V{\rho})$ \ac{w.r.t.} the distillation operation ${\mathcal{D}}$ conditional on the optimal fidelity being achieved.
This problem can be formulated as:
\begin{align*}
\Ps_{P}:&\quad\max_{\mathcal{D}}P_{\mathcal{D}}(\V{\rho})\\
\mbox{s.t.}&\quad F_{\mathcal{D}}(\V{\rho})= F^*(\V{\rho}).
\end{align*}

\section{Efficient \ac{QED} for \ac{PMD} channels}
\label{sec:upper_alg}
This section will first characterize the optimal performance of problems $\Ps_{F}$ and $\Ps_{P}$, and then give an algorithm which achieves the optimal performance in every round of distillation.
For conciseness, in the following, both $|h_{\mathrm{A}}\rangle$ and $|h_{\mathrm{B}}\rangle$ are denoted as $|0\rangle$, and both $|v_{\mathrm{A}}\rangle$ and $|v_{\mathrm{B}}\rangle$ are denoted as $|1\rangle$. The network node index can be omitted without causing confusion because only local operations are involved in the distillation process.

\subsection{Characterization of performance upper bounds}
\label{sec:upper}
This subsection considers a set of density matrices that includes the density matrices given in \eqref{eqn:intialDM_2}, and characterizes the corresponding optimal performance of problems $\Ps_{F}$ and $\Ps_{P}$.
Specifically, the set of density matrices is defined as 
\begin{align*}
\mathcal{S}=\{\V{\rho} \mbox{ that satisfies } \eqref{eqn:generalrho}\}
\end{align*}
where
\begin{align}
\V{\rho}&=\frac{1}{2}\big(|ab\rangle\langle ab| + e^{-\imath\alpha}R^{\dag}|ab\rangle\langle a'b'| \nonumber\\
&\hspace{4.3mm}+ 
e^{\imath\alpha}R|a'b'\rangle\langle ab| + |a'b'\rangle\langle a'b'|\big).\label{eqn:generalrho}
\end{align}
in which 
\begin{align*}
\langle x | x' \rangle &= 0, \quad x\in\{a,b\},\\
\alpha &\in [0,2\pi),\quad \mbox{and}\\
|R|&\in[0,1].
\end{align*}

First simplify the initial density matrix $\V{\rho}$ in \eqref{eqn:generalrho}. 
By performing spectrum decomposition, it can be obtained that
\begin{align}
\V{\rho}&= F|\phi_1\rangle\langle\phi_1| +(1-F) |\phi_2\rangle\langle\phi_2| \label{eqn:generalrho_2}
\end{align}
where
\begin{align*}
F &= \frac{1}{2} (1+ |R|)\\
|\phi_1\rangle &= \frac{1}{\sqrt{2}}(|ab\rangle+e^{\imath\theta}|a'b'\rangle)\\
|\phi_2\rangle &= \frac{1}{\sqrt{2}}(|ab\rangle-e^{\imath\theta}|a'b'\rangle)\\
\theta &= \alpha + \mathrm{Phase}\{R\}
\end{align*}

The following theorem characterizes the optimal fidelity that can be achieved when input density matrix $\V\rho \in\mathcal{S}$.

\begin{Thm}[Optimal fidelity] \thlabel{thm:UBF} When $\V\rho \in\mathcal{S}$, the optimal performance of $\Ps_{F}$  is given by
\begin{align}
F^*(\V{\rho}) = \frac{F^2}{F^2+(1-F)^2}.
\label{eqn:Fidelity_Upper}
\end{align}
\end{Thm}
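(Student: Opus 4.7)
The plan is to prove the upper bound $F_{\mathcal{D}}(\V{\rho}) \le F^2/(F^2+(1-F)^2)$ for any admissible distillation $\mathcal{D}$, and then exhibit a protocol that attains it. I would start by normalizing the state. The spectral decomposition displayed just before the theorem expresses $\V{\rho}$ as a mixture of two orthogonal maximally entangled states $|\phi_1\rangle, |\phi_2\rangle$; since both are maximally entangled there exist local unitaries $U_{\mathrm{A}}\otimes U_{\mathrm{B}}$ taking $|\phi_1\rangle\mapsto|\Phi^+\rangle$ and $|\phi_2\rangle\mapsto|\Phi^-\rangle=\frac{1}{\sqrt{2}}(|h_{\mathrm{A}}h_{\mathrm{B}}\rangle-|v_{\mathrm{A}}v_{\mathrm{B}}\rangle)$. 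Because these unitaries are reversible LOCC and merely relabel the fidelity target, they can be absorbed into $\mathcal{D}$ without changing $F_{\mathcal{D}}$ or $P_{\mathcal{D}}$. So without loss of generality
\begin{align*}
\V{\rho}=F|\Phi^+\rangle\langle\Phi^+|+(1-F)|\Phi^-\rangle\langle\Phi^-|,
\end{align*}
a Bell-diagonal state supported on a two-dimensional subspace, and $\V{\rho}\otimes\V{\rho}$ becomes a diagonal mixture over $\{|\Phi^+\Phi^+\rangle,|\Phi^+\Phi^-\rangle,|\Phi^-\Phi^+\rangle,|\Phi^-\Phi^-\rangle\}$ with weights $F^2,\,F(1-F),\,(1-F)F,\,(1-F)^2$.

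Achievability of $F^2/(F^2+(1-F)^2)$ is the canonical CNOT-based parity check: Alice and Bob each apply a local CNOT with the second pair's qubit as control and the first pair's as target, then measure the second pair in the computational basis. Conditional on matching outcomes, the ``parallel'' inputs $(+,+)$ and $(-,-)$ survive while the ``anti-parallel'' inputs are rejected, leaving the kept pair as a mixture of $|\Phi^+\rangle$ and $|\Phi^-\rangle$ with weights proportional to $F^2$ and $(1-F)^2$. The fidelity is then exactly $F^2/(F^2+(1-F)^2)$. I expect the explicit algorithm constructed later in Section~\ref{sec:upper_alg} to realize this operation (possibly after a local basis change absorbing the unitaries from the first step).

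The main difficulty is the converse: ruling out that some more clever, possibly adaptive and multi-round LOCC protocol beats $F^2/(F^2+(1-F)^2)$. My intended approach is to relax LOCC to the strictly larger class of separable CP maps, whose Kraus operators factor as $K_k=A_k\otimes B_k$ with $A_k,B_k$ acting on the two qubits on each side. A convexity argument over measurement outcomes then reduces the problem to optimizing $F_{\mathcal{D}}$ for a single product Kraus operator, whose numerator $\langle\Phi^+|\mathrm{tr}_{3,4}[(A\otimes B)(\V{\rho}\otimes\V{\rho})(A^\dag\otimes B^\dag)]|\Phi^+\rangle$ and denominator $\mathrm{tr}[(A\otimes B)(\V{\rho}\otimes\V{\rho})(A^\dag\otimes B^\dag)]$ are explicit quadratic forms in the entries of $A$ and $B$. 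The central obstacle is the algebraic optimization of this ratio. I anticipate attacking it by exploiting that $|\Phi^+\rangle$ and $|\Phi^-\rangle$ differ only by a local Pauli $Z$, which forces the cross-term contributions involving $F(1-F)$ to obey a Cauchy--Schwarz-type inequality saturating exactly at the parity-check operator constructed above, thereby pinning the maximum at $F^2/(F^2+(1-F)^2)$.
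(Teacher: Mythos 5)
Your opening move --- normalizing $\V{\rho}$ by local unitaries to a rank-two Bell-diagonal standard form --- coincides with the paper's first step, but the argument does not close. The converse, which is the entire mathematical content of the theorem, is left unproved. The paper disposes of it by transforming $\V{\rho}$ to $F|\Phi^+\rangle\langle\Phi^+|+(1-F)|\Psi^+\rangle\langle\Psi^+|$ via the unitaries in \eqref{eqn:UAB} and then invoking \cite[Thm. 2]{RuaDaiWin:J18}, which is precisely the statement that no LOCC operation on two copies of such a state yields fidelity above $F^2/(F^2+(1-F)^2)$. Your plan to reprove that bound --- relax to separable maps, reduce by convexity to a single product Kraus operator $A\otimes B$, then optimize the resulting ratio --- is a sensible outline, and the reduction steps are valid. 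But the decisive step, optimizing a ratio of two functions each quadratic in the entries of the $4\times 4$ matrices $A$ and $B$, is exactly the non-convex problem that \cite{OpaKur:99} formulated and could not solve in general, as this paper's introduction notes. Announcing that a ``Cauchy--Schwarz-type inequality'' will pin the maximum at the parity-check operator is a conjecture about how the calculation will go, not a proof; until that inequality is stated and verified, the upper bound is unestablished. This is the genuine gap.

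There is also a concrete flaw in your achievability half. You normalize to $F|\Phi^+\rangle\langle\Phi^+|+(1-F)|\Phi^-\rangle\langle\Phi^-|$, whose two components differ only in the \emph{phase} bit, and then run a computational-basis bilateral-CNOT parity check, which compares \emph{amplitude} bits. All four product inputs $|\Phi^{\pm}\rangle\otimes|\Phi^{\pm}\rangle$ then produce agreeing measurement outcomes, so nothing is ever rejected: the success probability is $1$ rather than $F^2+(1-F)^2$ and the kept pair's fidelity is not $F^2/(F^2+(1-F)^2)$ (with the standard control/target assignment it comes out to $F^2+(1-F)^2$, which for $F$ near $1$ is \emph{worse} than $F$). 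Your CNOT also has control and target interchanged relative to the working protocol, in which the kept pair is the control and the measured pair is the target. Both defects are repaired by normalizing instead to $F|\Phi^+\rangle\langle\Phi^+|+(1-F)|\Psi^+\rangle\langle\Psi^+|$ --- which is exactly what \eqref{eqn:UAB} accomplishes --- so that the phase error is converted into an amplitude error the parity check can detect; this is the content of the algorithm in Section~\ref{sec:alg} and Theorem~\ref{thm:dis_perf}. The achievability issue is a fixable slip; the missing converse is not.
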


\begin{proof} 
Please refer to Appendix~\ref{pf_thm:UBF} for the proof.
\end{proof}

The next theorem characterizes the upper bound of the success probability conditional on the optimal fidelity having been achieved.
\begin{Thm}[Optimal probability of success] \thlabel{thm:UBP} 
When \\$\V\rho \in\mathcal{S}$ with $|R|>0$, the optimal performance of $\Ps_{P}$  is given by
\begin{align}
P^*(\V{\rho}) = F^2+(1-F)^2.
\label{eqn:Prob_Upper}
\end{align}
\end{Thm}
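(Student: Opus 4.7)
The plan is to adapt the spectral decomposition used in the proof of \thref{thm:UBF} to express both $P_{\mathcal{D}}(\V{\rho})$ and the product $P_{\mathcal{D}}(\V{\rho})\,F_{\mathcal{D}}(\V{\rho})$ as linear functionals of how the \ac{LOCC} operation $\mathcal{D}$ acts on each pure eigen-component of $\V{\rho}\otimes\V{\rho}$, and then to extract from the equality conditions in the fidelity bound a tight upper bound on the success probability.

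First I would substitute the spectral form \eqref{eqn:generalrho_2} into $\V{\rho}\otimes\V{\rho}$, obtaining a convex combination of four pure-state projectors $|\phi_i\phi_j\rangle\langle\phi_i\phi_j|$ with weights $F^2$, $F(1-F)$, $F(1-F)$, and $(1-F)^2$ for $(i,j)\in\{(1,1),(1,2),(2,1),(2,2)\}$. For any LOCC $\mathcal{D}$, I would introduce the conditional retention probability $q_{ij}=\mathrm{tr}\{\mathcal{D}\{|\phi_i\phi_j\rangle\langle\phi_i\phi_j|\}\}$ and the conditional $|\Phi^+\rangle$-weight $a_{ij}=\langle\Phi^+|\,\mathrm{tr}_{3,4}\{\mathcal{D}\{|\phi_i\phi_j\rangle\langle\phi_i\phi_j|\}\}\,|\Phi^+\rangle$, which by positivity satisfy $0\le a_{ij}\le q_{ij}\le 1$. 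Linearity of the channel then yields
\begin{align*}
P_{\mathcal{D}}(\V{\rho})&=F^2 q_{11}+F(1-F)(q_{12}+q_{21})+(1-F)^2 q_{22},\\
P_{\mathcal{D}}(\V{\rho})\,F_{\mathcal{D}}(\V{\rho})&=F^2 a_{11}+F(1-F)(a_{12}+a_{21})+(1-F)^2 a_{22}.
\end{align*}

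Next I would invoke the fidelity upper bound established in \thref{thm:UBF} and re-examine its equality conditions in terms of the $(q_{ij},a_{ij})$. Saturating $F_{\mathcal{D}}(\V{\rho})=F^2/(F^2+(1-F)^2)$ forces (a) the two mixed components to be discarded, $q_{12}=q_{21}=0$; (b) the output from $|\phi_1\phi_1\rangle$ to be perfectly aligned with $|\Phi^+\rangle$, so $a_{11}=q_{11}$; (c) the output from $|\phi_2\phi_2\rangle$ to be orthogonal to $|\Phi^+\rangle$, so $a_{22}=0$; and (d) the two matched components to be retained with equal probability, $q_{11}=q_{22}$. Plugging these back into the formula for $P_{\mathcal{D}}(\V{\rho})$ collapses it to $q_{11}\,(F^2+(1-F)^2)$, and the trivial bound $q_{11}\le 1$ delivers $P^*(\V{\rho})\le F^2+(1-F)^2$. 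Achievability of this bound will be witnessed by the explicit distillation operation constructed later in Section~\ref{sec:upper_alg}, which realizes $q_{11}=q_{22}=1$ while respecting (a)--(c).

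The main obstacle is establishing the necessity (not merely the sufficiency) of the equality conditions (a)--(d). In particular, I need to rule out the possibility that a nonzero retention of the mixed components $|\phi_1\phi_2\rangle,|\phi_2\phi_1\rangle$, together with a suitably compensating allocation of the $a_{ij}$ feasible under the \ac{LOCC} structure, could still saturate the fidelity bound; this amounts to a linear-fractional optimization of the ratio of the two displayed expressions over the polytope of $(q_{ij},a_{ij})$ cut out by the LOCC constraints, and identifying its unique face of optimizers. The inequalities $a_{ij}\le q_{ij}$ alone are insufficient, so the LOCC-specific constraints from the proof of \thref{thm:UBF} are what make the necessity argument go through.
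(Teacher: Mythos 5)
Your setup---the spectral decomposition of $\V{\rho}\otimes\V{\rho}$ into the four projectors $|\phi_i\phi_j\rangle\langle\phi_i\phi_j|$ with weights $F^2$, $F(1-F)$, $F(1-F)$, $(1-F)^2$, and the linear functionals $q_{ij}$, $a_{ij}$---is exactly the paper's (its $p_{nm}$, $f_{nm}$), and you are right that the whole theorem reduces to showing that a fidelity-optimal $\mathcal{D}$ must have $q_{12}+q_{21}=0$. The gap is that you never prove this. As you yourself observe, the polytope $0\le a_{ij}\le q_{ij}\le 1$ contains points with $q_{12}+q_{21}>0$ and $a_{12}+a_{21}=(q_{12}+q_{21})\,F^*(\V{\rho})$ that saturate the fidelity bound while pushing $P_{\mathcal{D}}(\V{\rho})$ above $F^2+(1-F)^2$; ruling those out \emph{is} the theorem, and you defer it to unspecified ``LOCC-specific constraints from the proof of \thref{thm:UBF}.'' But the paper's proof of \thref{thm:UBF} is a black-box citation to \cite{RuaDaiWin:J18}; no equality conditions are extracted there, so there is nothing for your necessity argument to lean on, and your proposal does not supply one. (Your conditions (b)--(d) are also superfluous: once $q_{12}=q_{21}=0$, the bound $P_{\mathcal{D}}(\V{\rho})=F^2q_{11}+(1-F)^2q_{22}\le F^2+(1-F)^2$ is immediate.)

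The idea that closes this gap in the paper, and that is absent from your proposal, is a perturbation in $F$: the coefficients $q_{ij},a_{ij}$ are intrinsic to $\mathcal{D}$ and independent of $F$, whereas the mixture weights $F^2$, $F(1-F)$, $(1-F)^2$ are not. One therefore applies the \emph{same} $\mathcal{D}$ to a second state $\tilde{\V\rho}\in\mathcal{S}$ with $\tilde F\in(\tfrac12,F)$ (this is where the hypothesis $|R|>0$, i.e.\ $F>\tfrac12$, enters). Writing $S(F)$ for the numerator of the fidelity and $N(F)$ for the denominator minus the numerator, the optimality condition at $F$ fixes $N(F)/S(F)=(1-F)^2/F^2$; if $q_{12}+q_{21}>0$ then either $a_{12}+a_{21}>0$ or $q_{12}+q_{21}>a_{12}+a_{21}$, and in either case the shifted relative weight of the cross terms gives the strict inequality $S(\tilde F)>\frac{\tilde F^2}{F^2}S(F)$ or $N(\tilde F)<\frac{(1-\tilde F)^2}{(1-F)^2}N(F)$, hence $F_{\mathcal{D}}(\tilde{\V\rho})>\tilde F^2/\big(\tilde F^2+(1-\tilde F)^2\big)$, contradicting \thref{thm:UBF} applied to $\tilde{\V\rho}$. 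Without this ``vary $F$ and reuse Theorem~1 as a black box'' step, or an explicit derivation of the equality conditions from \cite{RuaDaiWin:J18}, your argument does not go through.
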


\begin{proof}
Please refer to Appendix~\ref{pf_thm:UBP} for the proof.
\end{proof}

\subsection{Algorithm design}
\label{sec:alg}
The two theorems in the previous subsection characterize the optimal fidelity and the corresponding optimal success probability of distillation operations on two pairs of qubits.
In this subsection, guided by the insights obtained from the proofs of \thref{thm:UBF} and \thref{thm:UBP},
the following recurrence \ac{QED} algorithm is designed to achieve the optimal fidelity and the corresponding optimal success probability in every round of distillation.

\begin{Algno}[Efficient \ac{QED} for \ac{PMD} channel]
\label{alg:dis}
\begin{itemize}
\item[]
\item{\bf Local state preparation:} For each qubit pair, the network nodes transform the density matrix to $\check{\rho}$ using local unitary operators $\M{U}_\mathrm{A}$ and $\M{U}_\mathrm{B}$ defined in \eqref{eqn:UAB}.

\item{\bf First round distillation:} The nodes take two of the kept qubit pairs, perform the following operations, and repeat these operations on all kept qubit pairs.

(i) Each node locally performs CNOT operation, i.e., $\M{U}=|00\rangle\langle 00|+|01\rangle\langle 01|+|10\rangle\langle 11|+|11 \rangle\langle 10|$ on the two qubits at hand.

(ii) Each node measures the target bit (i.e., the qubit in the second pair) using operators $|0\rangle\langle 0|$, $|1\rangle\langle 1|$, and transmits the measurement result to the other node via classical communication.

(iii) If their measurement results do not agree, the nodes discard the source qubit pair (i.e., the first pair). Otherwise, the nodes keep the source qubit pair.

\item{\bf Following rounds:} Network nodes perform the same operations as in the first round, until the fidelity of the kept qubit pairs exceeds the required threshold.~\QEDB
\end{itemize}
\end{Algno}

In the following, we will first characterize the performance of the proposed algorithm in \thref{thm:dis_perf}, then explain the implications of this theorem in two remarks.

\begin{Thm}[Performance of the proposed algorithm]\thlabel{thm:dis_perf}
In the $k$-th round of distillation, the source qubit pair is kept with fidelity
\begin{align}
F_{k}=\frac{F_{k-1}^2}{F_{k-1}^2+(1-F_{k-1})^2}\label{eqn:F_update}
\end{align}
probability 
\begin{align}P_k=F_{k-1}^2+(1-F_{k-1})^2 \label{eqn:P_update}\end{align} 
and density matrix
\begin{align}
\V{\rho}^{(k)}=F_k|\Phi^+\rangle\langle\Phi^+| +(1-F_{k})|\Psi^+\rangle\langle\Psi^+|.\label{eqn:rhoform}
\end{align}

\end{Thm}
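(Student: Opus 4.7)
The plan is induction on the round index $k$, with the base case absorbed into the local state preparation step and the inductive step reducing to a short computation on tensor products of Bell states.

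For the base case ($k=0$), the goal is to show that after applying $\M{U}_{\mathrm{A}}\otimes\M{U}_{\mathrm{B}}$ from \eqref{eqn:UAB}, the density matrix takes the form $\V{\rho}^{(0)}=F_{0}|\Phi^+\rangle\langle\Phi^+|+(1-F_{0})|\Psi^+\rangle\langle\Psi^+|$ with $F_{0}=(1+|R|)/2$. Starting from the two-term spectral decomposition \eqref{eqn:generalrho_2}, and using the fact that any two orthogonal maximally entangled two-qubit states can be mapped onto any chosen pair of Bell states by local unitaries, the task reduces to verifying that $\M{U}_{\mathrm{A}}$ and $\M{U}_{\mathrm{B}}$ in \eqref{eqn:UAB} are tailored to send the eigenvectors $|\phi_{1}\rangle$ and $|\phi_{2}\rangle$ of $\V{\rho}$ to $|\Phi^+\rangle$ and $|\Psi^+\rangle$, respectively, with the phase $\theta$ absorbed into one of the local rotations. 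This is checked by direct substitution.

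For the inductive step, assume $\V{\rho}^{(k-1)}=F_{k-1}|\Phi^+\rangle\langle\Phi^+|+(1-F_{k-1})|\Psi^+\rangle\langle\Psi^+|$. The joint input to round $k$ is $\V{\rho}^{(k-1)}\otimes\V{\rho}^{(k-1)}$, which expands into four rank-one terms on $|\Phi^+\Phi^+\rangle$, $|\Phi^+\Psi^+\rangle$, $|\Psi^+\Phi^+\rangle$, $|\Psi^+\Psi^+\rangle$ with weights $F_{k-1}^{2}$, $F_{k-1}(1-F_{k-1})$, $(1-F_{k-1})F_{k-1}$, and $(1-F_{k-1})^{2}$. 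Evaluating the bilateral CNOT on each of these four vectors in the computational basis shows that it preserves the source-pair Bell character and XORs it onto the target pair, producing the permutation $|\Phi^+\Phi^+\rangle\mapsto|\Phi^+\Phi^+\rangle$, $|\Phi^+\Psi^+\rangle\mapsto|\Phi^+\Psi^+\rangle$, $|\Psi^+\Phi^+\rangle\mapsto|\Psi^+\Psi^+\rangle$, $|\Psi^+\Psi^+\rangle\mapsto|\Psi^+\Phi^+\rangle$. Measuring the target qubits in the computational basis then yields agreeing outcomes exactly when the post-CNOT target pair sits in the $|\Phi^+\rangle$ sector, i.e., for the terms carrying $|\Phi^+\Phi^+\rangle$ and $|\Psi^+\Psi^+\rangle$; the remaining two terms always yield disagreeing outcomes and are discarded by step (iii) of the algorithm. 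Summing the surviving contributions gives the unnormalized kept operator $F_{k-1}^{2}|\Phi^+\rangle\langle\Phi^+|+(1-F_{k-1})^{2}|\Psi^+\rangle\langle\Psi^+|$, whose trace is $P_{k}$ in \eqref{eqn:P_update} and whose normalization is \eqref{eqn:rhoform} with fidelity update \eqref{eqn:F_update}, closing the induction.

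The main obstacle is the base case: one has to confirm that the local unitaries in \eqref{eqn:UAB} map the spectral eigenstates of $\V{\rho}$ onto $|\Phi^+\rangle$ and $|\Psi^+\rangle$ rather than onto, say, $|\Phi^+\rangle$ and $|\Phi^-\rangle$, because the inductive argument hinges on the $\{|\Phi^+\rangle,|\Psi^+\rangle\}$ pair being the one whose bilateral-CNOT/target-measurement orbit cleanly separates into agreeing and disagreeing outcomes, with the two surviving weights entering quadratically as $F_{k-1}^{2}$ and $(1-F_{k-1})^{2}$. Once this alignment is settled, the recursion is a short Bell-diagonal bookkeeping calculation in the spirit of \cite{BenBraPopSchSmoWoo:96,DeuEkeJozMacPopSan:96}, and is precisely what produces the quadratic-in-$(1-F)$ suppression of the unwanted Bell weight that underlies the algorithm's fast convergence.
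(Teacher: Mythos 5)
Your proposal is correct and follows essentially the same route as the paper's proof: the base case is the local state preparation establishing \eqref{eqn:rho_transformed}, and the inductive step tracks how the bilateral CNOT permutes the four tensor-product terms $F_{k-1}^2$, $F_{k-1}(1-F_{k-1})$ (twice), $(1-F_{k-1})^2$ and which of them survive the agreeing target measurements. The only difference is presentational --- you carry out the bookkeeping directly in Bell-state language, whereas the paper expands the same four states $|\Omega^{(i)}\rangle$ explicitly in the four-qubit computational basis before projecting.
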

\begin{proof}
The proof is given in Appendix~\ref{pf_thm:dis_perf}.
\end{proof}

\begin{Remark}[Optimality of the proposed algorithm] \label{remark:converge}
In Theorem~\ref{thm:dis_perf}, \eqref{eqn:rhoform} shows that the proposed algorithm always keeps the density matrix of qubit pairs in set $\mathcal{S}$, 
which means that the results in \thref{thm:UBF} and \thref{thm:UBP} apply to every round of distillation.
Therefore, by comparing \eqref{eqn:Fidelity_Upper}, \eqref{eqn:Prob_Upper} with \eqref{eqn:F_update}, \eqref{eqn:P_update},
one can see that the proposed algorithm achieves the optimal fidelity and the corresponding optimal success probability in every round of distillation.
This feature enables the proposed algorithm to achieve high efficiency.
~\hfill~\QEDB
\end{Remark}

\begin{Remark}[Convergence speed of fidelity]
In terms of the convergence speed of fidelity \ac{w.r.t.} the rounds of distillation in recurrence \ac{QED} algorithms, the only existing theoretical result was given in \cite{BenBraPopSchSmoWoo:96}, which shows that the fidelity of kept qubit pairs in consecutive rounds is
\begin{align}
{F}_k=\frac{F_{k-1}^2+\frac{1}{9}(1-F_{k-1})^2}{F_{k-1}^2+\frac{2}{3}F_{k-1}(1-F_{k-1})+\frac{5}{9}(1-F_{k-1})^2}\,.\label{eqn:speed-old}
\end{align}
In this case, when $F_0>\frac{1}{2}$, it can be obtained that
\begin{align}
\lim_{k\rightarrow \infty}\frac{1-F_k}{1-F_{k-1}}=\frac{2}{3} \,. \label{eqn:speed1}
\end{align}
For the proposed algorithms, it can be shown from \eqref{eqn:F_update} that
when  $F_0>\frac{1}{2}$
\begin{align}
\lim_{k\rightarrow \infty}\frac{1-F_k}{1-F_{k-1}}=0 \,, \qquad
 \lim_{k\rightarrow \infty}\frac{1-F_k}{(1-F_{k-1})^2}=1 \,.\label{eqn:speed2}
\end{align}
Equation
\eqref{eqn:speed1} shows that with the algorithm proposed in \cite{BenBraPopSchSmoWoo:96}, the fidelity of the qubit pairs converges to $1$ linearly at rate $\frac{2}{3}$, whereas \eqref{eqn:speed2} shows that with the proposed algorithms, the fidelity converges to $1$ quadratically.
Hence, the convergence speed of our algorithm is quadratic in number of iteration rounds, which is a significant improvement over the linear convergence achieved by the recurrence \ac{QED} algorithm proposed in \cite{BenBraPopSchSmoWoo:96}.

Hence, the convergence speed of the proposed algorithm is improved from linear to quadratic.~\hfill~\QEDB
\end{Remark}

\section{Numerical Results}
\label{sec:numerical}
We will now demonstrate the dependence of the proposed recurrence distillation \ac{QED} algorithm on the parameters of the \ac{PMD} channel by numerically calculating the yield and output fidelity for different channel configurations. 
We compare the yield of our algorithm with that obtained by an existing recurrence \ac{QED} algorithm \cite{BenBraPopSchSmoWoo:96}. 
As an additional benchmark, an upper bound of yield derived from distillable entanglement \cite{Rai:99,Rai:01} is also calculated and plotted.
While the achievability of this bound remains unknown, it is arguably the best known upper bound on the yield of any QED algorithms \cite{WanDua:17}. 
We find that our algorithm has a significant performance advantage in parameter regimes where partial \ac{PMD} compensation occurs \cite{ShtAntBro:OE11,AntShtBro:PRL11},
and achieves a yield close to the theoretical upper bound despite its simple recurrent distillation operations that involve only two qubit pairs.
Additionally, we have performed tests to examine how robust the proposed algorithm is to basis alignment errors.

To perform numerical tests, one needs to first specify the optical properties of the entanglement source in order to determine the form for $R(\tau_{\mathrm{A}},\tau_{\mathrm{B}})$ (see Appendix~\ref{sec:PMD} for more details).
We assume that the frequency content of the pulsed pump laser and frequency response of the filters are Gaussian.
Under this assumption, the form of $R(\tau_{\mathrm{A}},\tau_{\mathrm{B}})$ is given by \cite{BroGeoAntSht:OL11,ShtAntBro:OE11}
\begin{align*}
R(\tau_{\mathrm{A}},\tau_{\mathrm{B}})&= \kappa \int \int \text{d}\omega_{\mathrm{A}} \text{d}\omega_{\mathrm{B}} \vert H_{\mathrm{A}}(\omega_{\mathrm{A}})\vert^{2} \vert H_{\mathrm{B}}(\omega_{\mathrm{B}})\vert^{2} \\
&\hspace{14.3mm}\left| \tilde{E}_{p}\left(\omega_{\mathrm{A}}+\omega_{\mathrm{B}}\right)\right|^{2}e^{\imath (\tau_{\mathrm{A}} \omega_{\mathrm{A}}+\tau_{\mathrm{B}}\omega_{\mathrm{B}})}
\end{align*}
where $\tilde{E}_{p}(\omega)\propto e^{-\omega/4 B_{p}^{2}}$, $H_{i}(\omega) \propto e^{-(\omega \pm \Delta\Omega)^{2}/4B_{i}^{2}}$, $i\in\{\mathrm{A},\mathrm{B}\}$, with the $B_{i}$ terms representing the root mean square bandwidth of each filter. The central frequency of the pump is set to zero and Alice and Bob's filters are each offset from it by $\pm \Delta \Omega$.
The integral results in:
\begin{align*}
R(\tau_{\mathrm{A}},\tau_{\mathrm{B}})=e^{-\frac{B_{\mathrm{A}}^2 B_{\mathrm{B}}^2 (\tau_{\mathrm{A}}-\tau_{\mathrm{B}})^2+B_{\mathrm{A}}^2B_{p}^2 \tau_{\mathrm{A}}^2+B_{\mathrm{B}}^2 B_{p}^2 \tau_{\mathrm{B}}^2}{2 \left(B_{\mathrm{A}}^2+B_{\mathrm{B}}^2+B_{p}^2\right)}}e^{- i \Delta \Omega (\tau_{\mathrm{A}}-\tau_{\mathrm{B}})}
\end{align*}

\begin{figure}[t] \centering
\psfrag{0}[Br][Br][0.7]{0\hspace{0.3mm}}
\psfrag{0.1}[Br][Br][0.7]{0.1\hspace{0.3mm}}
\psfrag{0.2}[Br][Br][0.7]{0.2\hspace{0.3mm}}
\psfrag{0.3}[Br][Br][0.7]{0.3\hspace{0.3mm}}
\psfrag{0.4}[Br][Br][0.7]{0.4\hspace{0.3mm}}
\psfrag{0.5}[Br][Br][0.7]{0.5\hspace{0.3mm}}
\psfrag{0.6}[Br][Br][0.7]{0.6\hspace{0.3mm}}
\psfrag{0.7}[Br][Br][0.7]{0.7\hspace{0.3mm}}
\psfrag{0.8}[Br][Br][0.7]{0.8\hspace{0.3mm}}
\psfrag{0.9}[Br][Br][0.7]{0.9\hspace{0.3mm}}
\psfrag{1}[Br][Br][0.7]{1\hspace{0.3mm}}
\psfrag{0.0}[tt][tt][0.7]{0}
\psfrag{0.50}[tt][tt][0.7]{0.5}
\psfrag{1.0}[tt][tt][0.7]{1}
\psfrag{1.5}[tt][tt][0.7]{1.5}
\psfrag{2}[tt][tt][0.7]{2}
\psfrag{2.5}[tt][tt][0.7]{2.5}
\psfrag{3}[tt][tt][0.7]{3}
\psfrag{Yield}[tc][tc][0.8]{Yield}
\psfrag{tauB/tauA}[tc][tc][0.8]{$\tau_{\mathrm{B}}/\tau_{\mathrm{A}}$}
\psfrag{BBPSSW}[cl][cl][0.7]{\hspace{-0.mm}BBPSSW algorithm}
\psfrag{Proposed                    1}[cl][cl][0.7]{\hspace{-0.mm}Proposed algorithm}
\psfrag{theory}[cl][cl][0.7]{\hspace{-0.mm}Upper bound}
\psfrag{Gain:450}[cl][cl][0.7]{Gain: 450\%}
\psfrag{Gap:39}[cl][cl][0.7]{Gap: 36\%}
\includegraphics[scale=0.65]{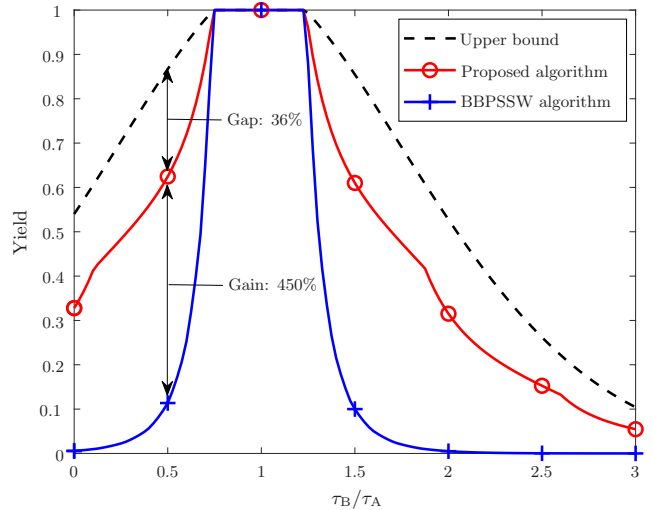}
\caption {Comparison of the yield as a function of $\tau_{\mathrm{B}}/\tau_{\mathrm{A}}$ for the proposed algorithm and the benchmarks, i.e., the upper bound \cite{Rai:99} and the BBPSSW algorithm \cite{BenBraPopSchSmoWoo:96}. In this plot, $B_{p}=0.1$,$B_{\mathrm{A}}=B_{\mathrm{B}}=1$, $\tau_{\mathrm{A}}=1$.}
\label{fig_Distillation_Efficiency_PMD_Case1}
\end{figure}
\begin{figure}[t] \centering
\psfrag{0}[Br][Br][0.7]{0\hspace{0.3mm}}
\psfrag{0.1}[Br][Br][0.7]{0.1\hspace{0mm}}
\psfrag{0.2}[Br][Br][0.7]{0.2\hspace{0.3mm}}
\psfrag{0.3}[Br][Br][0.7]{0.3\hspace{0.3mm}}
\psfrag{0.4}[Br][Br][0.7]{0.4\hspace{0.3mm}}
\psfrag{0.5}[Br][Br][0.7]{0.5\hspace{0.3mm}}
\psfrag{0.0}[tt][tt][0.7]{0}
\psfrag{0.50}[tt][tt][0.7]{0.5}
\psfrag{1.0}[tt][tt][0.7]{1}
\psfrag{1.5}[tt][tt][0.7]{1.5}
\psfrag{2}[tt][tt][0.7]{2}
\psfrag{2.5}[tt][tt][0.7]{2.5}
\psfrag{3}[tt][tt][0.7]{3}
\psfrag{tauB/tauA}[tc][tc][0.8]{$\tau_{\mathrm{B}}/\tau_{\mathrm{A}}$}
\psfrag{Yield}[tc][tc][0.8]{Yield}
\psfrag{BBPSSW}[cl][cl][0.7]{\hspace{-0.mm}BBPSSW algorithm}
\psfrag{Proposed                    1}[cl][cl][0.7]{\hspace{-0.mm}Proposed algorithm}
\psfrag{theory}[cl][cl][0.7]{\hspace{-0.mm}Upper bound}
\psfrag{Gain:5660}[cl][cl][0.7]{Gain: 5660\%}
\psfrag{Gap:53}[cl][cl][0.7]{Gap: 53\%}
\includegraphics[scale=0.65]{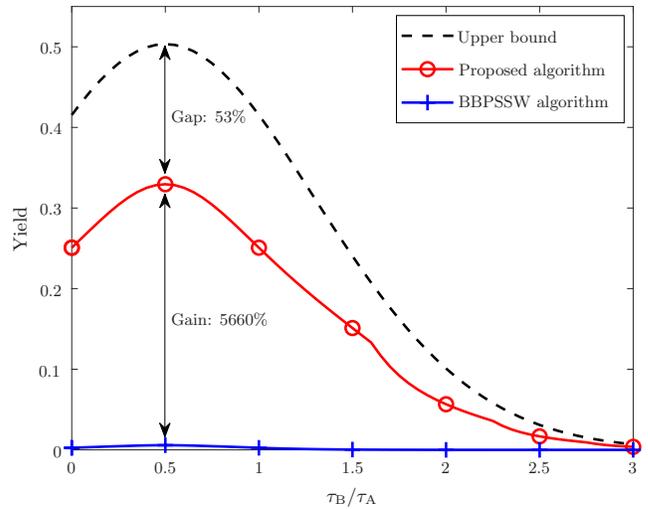}
\caption {Comparison of the yield as a function of $\tau_{\mathrm{B}}/\tau_{\mathrm{A}}$ for the proposed algorithm and the benchmarks, i.e., the upper bound \cite{Rai:99} and the BBPSSW algorithm \cite{BenBraPopSchSmoWoo:96}. In this plot, $B_{p}=1$,$B_{\mathrm{A}}=B_{\mathrm{B}}=1$, $\tau_{\mathrm{A}}=1$.}
\label{fig_Distillation_Efficiency_PMD_Case2}
\end{figure}

In the numerical tests, the targeted fidelity is set to be 0.99. 
The round of distillation $K$ is set to be the minimum round that achieves the targeted fidelity, and the yield of the algorithm is calculated according to \eqref{eqn:yield}.
We assume that the photon bandwidths $B_{\mathrm{A}}$ and $B_{\mathrm{B}}$ are equal, and we set $\tau_{\mathrm{A}}B_{\mathrm{A}}=1$ while varying the \ac{DGD} on photon $B$, given by $\tau_{B}$, the pump laser bandwidth $B_{p}$, and $\eta$, which specifies the alignment between the qubit and \ac{PSP} basis.

Figs.~\ref{fig_Distillation_Efficiency_PMD_Case1} and~\ref{fig_Distillation_Efficiency_PMD_Case2} plot the yield as a function of the ratio of the magnitudes of the \ac{DGD} in each optical path for two different pulse pump bandwidths.
Fig.~\ref{fig_Distillation_Efficiency_PMD_Case1} plots the case where the pump bandwidth is given by $B_{p}=0.1/\tau_{\mathrm{A}}$, which corresponds to a relatively long pump duration as compared to the \ac{DGD}.
Alternatively, Fig.~\ref{fig_Distillation_Efficiency_PMD_Case2} plots a case where a pump bandwidth is on the order of the \ac{DGD}, given by $B_{p}=1/\tau_{\mathrm{A}}$.

In Fig.~\ref{fig_Distillation_Efficiency_PMD_Case1} we see that both algorithms achieve a yield of unity for a finite region of $\tau_{\mathrm A}/\tau_{\mathrm B}$ centered around the \ac{DFS} at $\tau_{\mathrm A}=\tau_{\mathrm B}$ \cite{ShtAntBro:OE11,AntShtBro:PRL11}.
For regions of partial or no compensation, the regions outside of unit yield in Fig.~\ref{fig_Distillation_Efficiency_PMD_Case1} and all of Fig.~\ref{fig_Distillation_Efficiency_PMD_Case2}, the proposed algorithm achieves a fidelity that is significantly higher than the baseline algorithm from \cite{BenBraPopSchSmoWoo:96} and is reasonably close to the best known upper bound.
For instance, when $\tau_{\mathrm B}/\tau_{\mathrm A}=0.5$, the proposed algorithm increases the yield from 450\% to 5660\% compared to the baseline algorithm and is 36\% to 53\% away from the upper bound.
Given that the proposed algorithm adopts simple recurrent distillation operations that involve only two qubit pairs, it achieves a desirable balance between efficiency and implementability.
We also note that the peak of the yield for both algorithms in Fig.~\ref{fig_Distillation_Efficiency_PMD_Case2} is shifted away from $\tau_{\mathrm{A}}=\tau_{\mathrm{B}}$, as opposed to the peak being centered around this point in Fig.~\ref{fig_Distillation_Efficiency_PMD_Case1}.  
This observation is consistent with those of \cite{ShtAntBro:OE11} on \ac{PMD} compensation, which emphasizes the fact that our algorithm attempts to make use of nonlocal \ac{PMD} compensation to whatever extent is possible.

\begin{figure}[t] \centering
\psfrag{0}[Br][Br][0.7]{0\hspace{0.3mm}}
\psfrag{0.1}[Br][Br][0.7]{0.1\hspace{-0mm}}
\psfrag{0.2}[Br][Br][0.7]{0.2\hspace{0.3mm}}
\psfrag{0.3}[Br][Br][0.7]{0.3\hspace{0.3mm}}
\psfrag{0.4}[Br][Br][0.7]{0.4\hspace{0.3mm}}
\psfrag{0.5}[Br][Br][0.7]{0.5\hspace{0.3mm}}
\psfrag{0.6}[Br][Br][0.7]{0.6\hspace{0.3mm}}
\psfrag{0.7}[Br][Br][0.7]{0.7\hspace{0.3mm}}
\psfrag{0.8}[Br][Br][0.7]{0.8\hspace{0.3mm}}
\psfrag{0.9}[Br][Br][0.7]{0.9\hspace{0.3mm}}
\psfrag{1}[Br][Br][0.7]{1\hspace{-0mm}}
\psfrag{0.0}[tt][tt][0.7]{0}
\psfrag{0.20}[tt][tt][0.7]{0.2}
\psfrag{0.40}[tt][tt][0.7]{0.4}
\psfrag{0.60}[tt][tt][0.7]{0.6}
\psfrag{0.80}[tt][tt][0.7]{0.8}
\psfrag{1.0}[tt][tt][0.7]{1}
\psfrag{1.2}[tt][tt][0.7]{1.2}
\psfrag{1.4}[tt][tt][0.7]{1.4}
\psfrag{1.6}[tt][tt][0.7]{1.6}
\psfrag{1.8}[tt][tt][0.7]{1.8}
\psfrag{2}[tt][tt][0.7]{2}
\psfrag{Bp}[tc][tc][0.8]{$B_p$}
\psfrag{Yield}[tc][tc][0.8]{Yield}
\psfrag{taub=      0.1}[cl][cl][0.8]{\hspace{0.5mm}$\tau_{\mathrm{B}}=0.1$}
\psfrag{taub=0.5}[cl][cl][0.8]{\hspace{0.5mm}$\tau_{\mathrm{B}}=0.5$}
\psfrag{taub=0.9}[cl][cl][0.8]{\hspace{0.5mm}$\tau_{\mathrm{B}}=0.9$}
\psfrag{taub=1.3}[cl][cl][0.8]{\hspace{0.5mm}$\tau_{\mathrm{B}}=1.3$}
\includegraphics[scale=0.66]{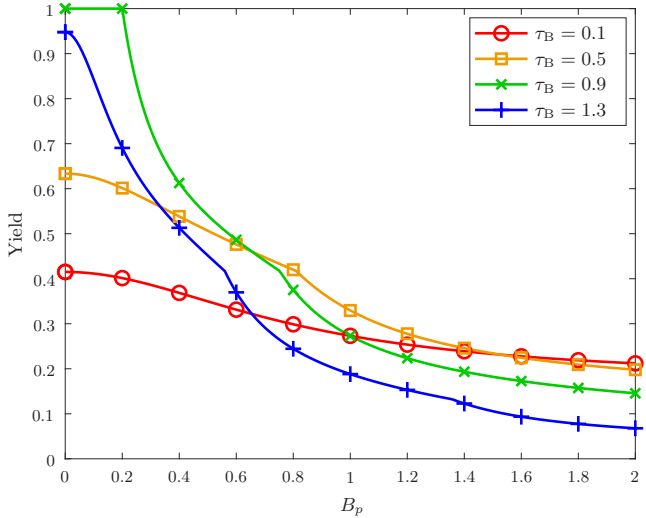}
\caption {The efficiency of the proposed algorithm as a function of the bandwidth of the source laser pump. In this figure, $B_{\mathrm{A}}=B_{\mathrm{B}}=1$, $\tau_{\mathrm{A}}=1$.}
\label{fig_PMD_Bp}
\end{figure}

To further demonstrate the impact of pump bandwidth on the performance of the proposed algorithm, the yield as a function of $B_{p}$ is plotted in Fig.~\ref{fig_PMD_Bp} for several values of $\tau_{\mathrm{B}}$.
From the figure, it can be observed that the yield of the algorithm is a decreasing function of the pump bandwidth $B_{p}$. 
This is because the larger $B_{p}$ is, the more distinguishable are the photon pairs advanced and delayed  by \ac{PMD}.
For analogous reasons, we see that when $B_p$ is large, the yield of the algorithm is a decreasing function of $\tau_{\mathrm{B}}$.
However, when $B_p$ is small, the yield of the algorithm is highest when the values of $\tau_{\mathrm{A}}$, $\tau_{\mathrm{B}}$ are similar, illustrating the benefits of the \ac{DFS} created by \ac{PMD} compensation.

\begin{figure}[t] \centering
\psfrag{0.1}[Br][Br][0.7]{0.1\hspace{0.3mm}}
\psfrag{0.2}[Br][Br][0.7]{0.2\hspace{0.3mm}}
\psfrag{0.3}[Br][Br][0.7]{0.3\hspace{0.3mm}}
\psfrag{0.4}[Br][Br][0.7]{0.4\hspace{0.3mm}}
\psfrag{0.5}[Br][Br][0.7]{0.5\hspace{0.3mm}}
\psfrag{0.6}[Br][Br][0.7]{0.6\hspace{0.3mm}}
\psfrag{0.7}[Br][Br][0.7]{0.7\hspace{0.3mm}}
\psfrag{0.75}[Br][Br][0.7]{0.75\hspace{0.3mm}}
\psfrag{0.8}[Br][Br][0.7]{0.8\hspace{0.3mm}}
\psfrag{0.85}[Br][Br][0.7]{0.85\hspace{0.3mm}}
\psfrag{0.9}[Br][Br][0.7]{0.9\hspace{0.3mm}}
\psfrag{0.95}[Br][Br][0.7]{0.95\hspace{0.3mm}}
\psfrag{1}[Br][Br][0.7]{1\hspace{0.3mm}}
\psfrag{0}[tt][tt][0.7]{0}
\psfrag{5}[tt][tt][0.7]{5}
\psfrag{10}[tt][tt][0.7]{10}
\psfrag{15}[tt][tt][0.7]{15}
\psfrag{20}[tt][tt][0.7]{20}
\psfrag{25}[tt][tt][0.7]{25}
\psfrag{30}[tt][tt][0.7]{30}
\psfrag{Misalignment angle (degree)}[tc][tc][0.8]{Misalignment angle (degree)}
\psfrag{Yield}[tc][tc][0.8]{Yield}
\psfrag{Output Fidelity}[tb][tb][0.8]{Output Fidelity}
\psfrag{tau=  0.2}[cl][cl][0.8]{\hspace{-0.3mm}$\tau=0.2$}
\psfrag{tau=  0.5}[cl][cl][0.8]{\hspace{-0.3mm}$\tau=0.5$}
\psfrag{tau=  1}[cl][cl][0.8]{\hspace{-0.3mm}$\tau=1$}
\includegraphics[scale=0.66]{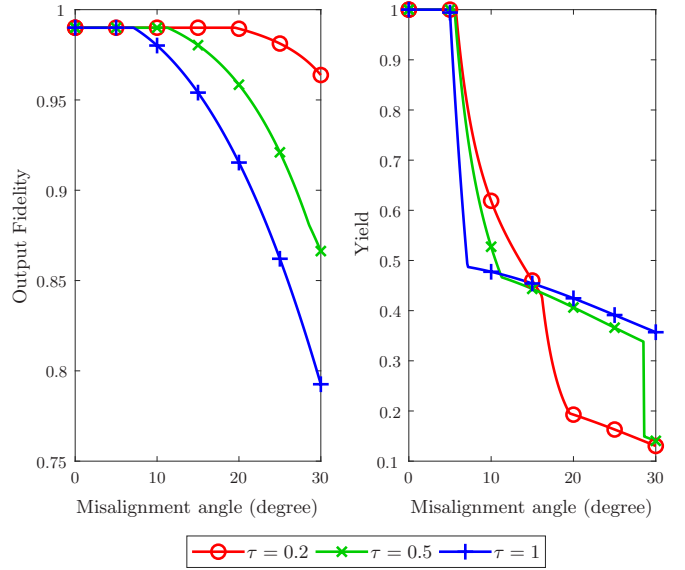}
\caption {The output fidelity and the efficiency of the proposed algorithm as a function of the misalignment angle $\theta$. $\eta_1 = \arcsin(\frac{\theta\pi}{180})$. In this figure, $B_{\mathrm{A}}=B_{\mathrm{B}}=1$, $B_{p}=0.1$, $\tau_{\mathrm{A}}=\tau_{\mathrm{B}}=\tau$. 
The output fidelity is the maximum achievable by the algorithm, up to a fidelity of $0.99$.
}
\label{fig_PMD_angle}
\end{figure}

Finally, the performance of the proposed algorithm is evaluated in the presence of basis alignment errors.
Until now is has been assumed that it is possible to locally rotate the polarization basis such that they perfectly align with the \ac{PSP} basis of the fiber.
As mentioned in Section~\ref{sec:system}, such an alignment is not expected to be performed frequently, as the \ac{PSP} of installed fiber optics has been shown to remain unchanged on the timescale of months \cite{BroFriBorTur:JLT06}.
However, any realistic implementation will have to deal with errors in the initial alignment process and the eventual drift of the \ac{PSP} with time.
To help us quantify the effects of implementation error on the performance of the proposed algorithm, we define the misalignment angle between the polarization and \ac{PSP} basis as $\theta$, where $\eta_{1}=\arcsin(\frac{\theta\pi}{180})$.
In Fig.~\ref{fig_PMD_angle} the output fidelity and the yield of the proposed algorithm are plotted as a function of misalignment angle $\theta$, for several values of $\tau$, where $\tau_{A}=\tau_{B}=\tau$.
The output fidelities shown in the plot are the maximum achievable fidelity with the proposed algorithm where the algorithm halts if it achieves a fidelity of $0.99$.
It can be seen that for all considered values of $\tau$, the algorithm can generate qubit pairs with required fidelity when the misalignment angle is no more than 5 degrees.
When the misalignment angle $\theta$ is greater than 5 degrees, the output fidelities are higher for smaller values of $\tau$, meaning that the robustness of the algorithm is inversely proportional to the magnitude of the \ac{DGD}.
Finally, it can be observed that the yield of the algorithm drops significantly when the misalignment angle $\theta$ is around 5 degrees.
This means that, even though the algorithm can still obtain photon pairs with high fidelity when $\theta>5$, it demands a significant increase in resources.
This result can be used to bound the precision of local unitary operations needed for an experimental implementation of this algorithm.

\section{Conclusion}
\label{sec:conclusion}
Recurrence \ac{QED} algorithms have good implementability and robustness, but improving their efficiency remains an interesting challenge.
This work adopts recurrence \ac{QED} algorithms to obtain high-quality entanglement from polarization-entangled photon pairs affected by \ac{PMD}-degraded channels.
For these photon pairs, we have characterized the optimal fidelity that can be achieved by recurrence \ac{QED} operations as well as the optimal success probability conditional on the optimal fidelity being achieved.
We then proposed a recurrence \ac{QED} algorithm that achieves both optimal fidelity and success probability in every round of distillation.
Analytical results show that the proposed algorithm improves the convergence speed of fidelity \ac{w.r.t.} the rounds of distillation from linear to quadratic.
Numerical tests show that the proposed algorithm significantly improves the efficiency of \ac{QED} in a wide range of operation regions, and achieves a yield close to the best known upper bound for any QED algorithms.

\appendix
\section{Analysis of the effect of \ac{PMD}}
\label{sec:PMD}

The effect of \ac{PMD} on a polarization-entangled photon pair depends on the way that the photons are generated, in particular, the type of nonlinear media and 
laser pump.
A rigorous treatment dealing with $\chi^{(3)}$ media and a \ac{CW} pump was given in \cite{AntShtBro:PRL11}, 
and the scenario with $\chi^{(2)}$ media and a pulsed pump was analyzed in \cite{ShtAntBro:OE11}.
Here we present an analytical treatment for $\chi^{(2)}$ media and a pulsed pump, and will also consider the limit where the frequency content of the pulse approaches a delta function, effectively becoming a \ac{CW} beam.

Consider a pair of photons which are entangled in two orthogonal polarizations as well as time.
These pairs can be created using parametric down conversion or fiber nonlinearities \cite{TakIno:PRA04,BurWei:PRL70}, and are notated as
\begin{equation}
\vert \psi \rangle = \vert f(t_{\mathrm{A}},t_{\mathrm{B}}) \rangle \otimes \frac{1}{\sqrt{2}}(\vert h_{\mathrm{A}} \rangle \vert h_{\mathrm{B}} \rangle + e^{\imath \alpha} \vert v_{\mathrm{A}} \rangle \vert v_{\mathrm{B}} \rangle ),
\label{eq:input_state}
\end{equation}
where $h_{i}$ and $v_{i}$ are orthogonal polarization basis states of photons $A$ and $B$. 
The term $\vert f(t_{\mathrm{A}},t_{\mathrm{B}})\rangle$ describes the time component of the state and is given by
\begin{equation}
\vert f(t_{\mathrm{A}},t_{\mathrm{B}}) \rangle = \int \int dt_{\mathrm{A}}dt_{\mathrm{B}}f(t_{\mathrm{A}},t_{\mathrm{B}})\vert t_{\mathrm{A}},t_{\mathrm{B}}\rangle.
\end{equation}
The function $\vert f(t_{\mathrm{A}},t_{\mathrm{B}})\vert^{2}$ is proportional to the probability that the two photons overlap in time, and $\int \text{d}t_{\mathrm{A}}\text{d}t_{\mathrm{B}}\vert f(t_{\mathrm{A}},t_{\mathrm{B}})\vert^{2}=1$.
Specifically, since the entanglement is generated via a $\chi^{(2)}$ media, this function can be written as 
\begin{align}f(t_{\mathrm{A}},t_{\mathrm{B}})=\int \text{d}t H_{\mathrm{A}}^{*}(t-t_{\mathrm{A}})H_{\mathrm{B}}^{*}(t-t_{\mathrm{B}})E_{p}(t)\label{eqn:Ep}
\end{align}
where $H_{i}^{*}(t)$ represents the inverse Fourier transform of the frequency filter $H_{i}(\omega)$ at node $i\in\{\mathrm{A},\mathrm{B}\}$ and $E_{p}(t)$ is the envelope of the pump signal.

The two types of laser pumps, \ac{CW} and pulsed, are characterized by the envelope of the pump signal $E_{p}(t)$ and its Fourier transform $\tilde{E}_{p}(\omega)$, which describes the frequency content of the input pulse.
Experimentally, pulsed pump lasers are convenient because they allow experiments to be broken into discrete detection time bins, and can result in wider bandwidth signal and idler photons, which enables multiple channels.
For \ac{CW} lasers, $\vert\tilde{E}_{p}(\omega)\vert^{2}$ approaches a delta function, which is a constant in the time domain.
In this case, $f(t_{\mathrm{A}},t_{\mathrm{B}})$ becomes a function of only the time difference, 
removing any absolute reference and hence simplifies analysis.

The effect of \ac{PMD} is to advance or delay photon arrival times, with the maximum and minimum alterations occurring for photons with polarizations equal to the PSP of the fiber \cite{AntShtBro:PRL11}.
Therefore, it is convenient to write the initial state in terms of the PSP basis $\{\vert  s_{i} \rangle, \vert  s'_{i} \rangle\}$, $i\in\{\mathrm{A}, \mathrm{B}\}$.
In this basis the initial state becomes
\begin{align}
\vert \psi_{PSP} \rangle &= \vert f(t_{\mathrm{A}},t_{\mathrm{B}}) \rangle \otimes \Big[\frac{\eta_{1}}{\sqrt{2}}\left(\vert s_{\mathrm{A}} \rangle \vert s_{\mathrm{B}} \rangle + e^{\imath \alpha_{1}} \vert s'_{\mathrm{A}} \rangle \vert s'_{\mathrm{B}} \rangle \right)\nonumber\\
&\hspace{22mm}+\frac{\eta_{2}}{\sqrt{2}}\left(\vert s_{\mathrm{A}} \rangle \vert s'_{\mathrm{B}} \rangle - e^{\imath \alpha_{2}} \vert s'_{\mathrm{A}} \rangle \vert s_{\mathrm{B}} \rangle \right)\Big],
\end{align}
where 
\begin{align*}
\eta_{1}&=(s_{\mathrm{A}}\cdot h_{\mathrm{A}})(s_{\mathrm{B}}\cdot h_{\mathrm{B}})+e^{\imath\alpha}(s_{\mathrm{A}}\cdot v_{\mathrm{A}})(s_{\mathrm{B}}\cdot v_{\mathrm{B}}),\\
\eta_{2}&=(s_{\mathrm{A}}\cdot h_{\mathrm{A}})(s'_{\mathrm{B}}\cdot h_{\mathrm{B}})+e^{\imath\alpha}(s_{\mathrm{A}}\cdot v_{\mathrm{A}})(s'_{\mathrm{B}}\cdot v_{\mathrm{B}}),
\end{align*}
and $\alpha_{i}$ is defined through the relation $\eta_{i}= \vert \eta_{i}\vert e^{\imath(\alpha-\alpha_{i})/2}$.  
Time delays resulting from \ac{PMD} in the fibers can now be described as
\begin{align}
\vert \psi_{\mathrm{PMD}} \rangle &= \frac{\eta_{1}}{\sqrt{2}}\vert f(t_{\mathrm{A}}-\frac{\tau_{\mathrm{A}}}{2},t_{\mathrm{B}}-\frac{\tau_{\mathrm{B}}}{2}) \rangle \otimes\vert s_{\mathrm{A}} s_{\mathrm{B}} \rangle +\nonumber\\
&\hspace{4.5mm}\frac{\eta_{2}}{\sqrt{2}}\vert f(t_{\mathrm{A}}-\frac{\tau_{\mathrm{A}}}{2},t_{\mathrm{B}}+\frac{\tau_{\mathrm{B}}}{2}) \rangle\otimes \vert s_{\mathrm{A}} s'_{\mathrm{B}} \rangle- \nonumber\\
&\hspace{4.5mm} \frac{\eta_{2}e^{\imath \alpha_{2}}}{\sqrt{2}}\vert f(t_{\mathrm{A}}+\frac{\tau_{\mathrm{A}}}{2},t_{\mathrm{B}}-\frac{\tau_{\mathrm{B}}}{2}) \rangle \otimes \vert s'_{\mathrm{A}} s_{\mathrm{B}} \rangle +\nonumber\\
&\hspace{4.5mm}\frac{\eta_{1}e^{\imath \alpha_{1}}}{\sqrt{2}}\vert f(t_{\mathrm{A}}+\frac{\tau_{\mathrm{A}}}{2},t_{\mathrm{B}}+\frac{\tau_{\mathrm{A}}}{2}) \rangle \otimes\vert s'_{\mathrm{A}} s'_{\mathrm{B}} \rangle.\label{eqn:Gen_PMD}
\end{align}

To account for the integration time of the photon detectors, the time modes of the two photons are to be traced out.
Then the polarization state of the two photons can be characterized by a density matrix for two qubits.
When written in the basis of $\vert s_{\mathrm{A}} s_{\mathrm{B}} \rangle$, $\vert s_{\mathrm{A}} s'_{\mathrm{B}} \rangle$, $\vert s'_{\mathrm{A}} s_{\mathrm{B}} \rangle$, and $\vert s'_{\mathrm{A}} s'_{\mathrm{B}} \rangle$,
the density matrix resulting from integration of time results is given by \eqref{eqn:Gen_Density}, in which
\begin{align}R(\tau_{\mathrm{A}},\tau_{\mathrm{B}})=\int\int\text{d}t_{\mathrm{A}}\text{d}t_{\mathrm{B}}f(t_{\mathrm{A}}+\tau_{\mathrm{A}},t_{\mathrm{B}}+\tau_{\mathrm{B}})f^{\dag}(t_{\mathrm{A}},t_{\mathrm{B}})
\label{eqn:RAB}
\end{align}
with the property that $R(0,0)=1$.

The approach above can also be applied to scenarios involving $\chi^{(3)}$ media,
which changes \eqref{eqn:Ep} and in turn \eqref{eqn:RAB}.
Since these changes have minor impact on the analytical results as well as the numerical findings in this paper, we omit the analysis for $\chi^{(3)}$ media to avoid redundancy.

\section{Local Rotation on One Photon is Sufficient for Alignment}
\label{pf_align}
We will first prove a lemma, and then show that as a special case of the lemma, local rotation on one of the photons can achieve
the alignment of the \ac{PSP} basis with the photon polarization basis.
\begin{Lem}[The basis of maximally entangled states] \label{lem:rotation}
$|\phi\rangle$  is a maximally entangled state of two qubits, and $\{|s\rangle, |s'\rangle\}$ is an  arbitrary basis of a qubit. Then there exists some basis of a qubit $\{|\tilde{s}\rangle, |\tilde{s}'\rangle\}$ such that
\begin{align}
|\phi\rangle= \frac{1}{\sqrt{2}}(|\tilde{s}s\rangle + |\tilde{s}'s'\rangle)\label{eqn:schmidt}
\end{align}
\end{Lem}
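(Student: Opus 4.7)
The plan is to expand $|\phi\rangle$ in the given basis $\{|s\rangle,|s'\rangle\}$ on the second qubit and to extract an orthonormal basis on the first qubit directly from the coefficients, using the maximal-entanglement hypothesis. Concretely, I would write
\begin{align*}
|\phi\rangle = |\psi_s\rangle\otimes|s\rangle + |\psi_{s'}\rangle\otimes|s'\rangle,
\end{align*}
where $|\psi_s\rangle,|\psi_{s'}\rangle$ are (possibly unnormalized) vectors in the first qubit's Hilbert space, obtained by taking partial inner products with $|s\rangle$ and $|s'\rangle$ respectively. This step is just a bookkeeping choice and does not use the maximal-entanglement hypothesis.

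Next I would invoke the fact that $|\phi\rangle$ is maximally entangled, which is equivalent to saying that the reduced density matrix on the second qubit is $\tfrac{1}{2}\mathbb{I}_2$. Computing $\mathrm{tr}_1\{|\phi\rangle\langle\phi|\}$ in the $\{|s\rangle,|s'\rangle\}$ basis gives the $2\times 2$ Gram matrix of $\{|\psi_s\rangle,|\psi_{s'}\rangle\}$, so the condition reduces to
\begin{align*}
\langle\psi_s|\psi_s\rangle = \langle\psi_{s'}|\psi_{s'}\rangle = \tfrac{1}{2},\qquad \langle\psi_s|\psi_{s'}\rangle = 0.
\end{align*}
Hence $|\tilde{s}\rangle := \sqrt{2}\,|\psi_s\rangle$ and $|\tilde{s}'\rangle := \sqrt{2}\,|\psi_{s'}\rangle$ form an orthonormal pair, which is precisely a qubit basis, and substituting back yields \eqref{eqn:schmidt}.

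There is no serious obstacle in this argument: the only subtlety is justifying that the three conditions above follow from maximal entanglement in a way that does not implicitly depend on a particular Schmidt basis. I would do this by noting that a bipartite pure state on $\mathbb{C}^2\otimes\mathbb{C}^2$ is maximally entangled iff both reduced density matrices are proportional to the identity, and then expressing $\rho_B$ in the given basis to read off the conditions above directly. The construction of $\{|\tilde{s}\rangle,|\tilde{s}'\rangle\}$ is then immediate and constructive, which is exactly what is needed for the subsequent application to \ac{PSP} alignment by local rotation on a single photon.
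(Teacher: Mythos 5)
Your proposal is correct, and it reaches the conclusion by a route that differs from the paper's in its key step. Both arguments begin with the same bookkeeping decomposition $|\phi\rangle = |\psi_s\rangle\otimes|s\rangle + |\psi_{s'}\rangle\otimes|s'\rangle$ (the paper writes the two partial vectors explicitly as $\alpha_{00}|s\rangle+\alpha_{10}|s'\rangle$ and $\alpha_{01}|s\rangle+\alpha_{11}|s'\rangle$). The paper then assembles the coefficients into a $2\times 2$ matrix, performs a singular value decomposition, and invokes the characterization that a maximally entangled two-qubit state has both singular values (Schmidt coefficients) equal to $\tfrac{1}{\sqrt{2}}$, so the coefficient matrix is $\tfrac{1}{\sqrt{2}}$ times a unitary; the new basis is then obtained by applying that unitary to $\{|s\rangle,|s'\rangle\}$. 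You instead invoke the equivalent characterization that the reduced density matrix of the second qubit is $\tfrac{1}{2}\mathbb{I}_2$, observe that this reduced matrix written in the $\{|s\rangle,|s'\rangle\}$ basis is exactly the Gram matrix of $\{|\psi_s\rangle,|\psi_{s'}\rangle\}$, and read off orthogonality and equal norms $\tfrac{1}{\sqrt{2}}$ directly. The two routes are mathematically equivalent (the coefficient matrix being $\tfrac{1}{\sqrt{2}}$ times a unitary is the same statement as its Gram matrix being $\tfrac{1}{2}\mathbb{I}_2$), but yours is more elementary in that it needs no SVD, and it makes explicit which physical quantity carries the maximal-entanglement hypothesis; the paper's SVD route has the mild advantage of exhibiting the change of basis as an explicit unitary acting on $\{|s\rangle,|s'\rangle\}$, which matches how the alignment is later implemented as a local rotation. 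Your closing remark correctly identifies and discharges the only subtlety, namely that the reduced-density-matrix criterion is basis-independent.
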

\proof
Express $|\phi\rangle$ in the basis of $\{|s\rangle, |s'\rangle\}$, i.e.,
\begin{align}
|\phi\rangle&= \alpha_{00} |ss\rangle + \alpha_{01} |ss'\rangle + \alpha_{10} |s's\rangle + \alpha_{11} |s's'\rangle\nonumber\\
& = (\alpha_{00} |s\rangle + \alpha_{10} |s'\rangle)\otimes |s\rangle + 
(\alpha_{01} |s\rangle + \alpha_{11} |s'\rangle) \otimes |s'\rangle.
\label{eqn:phi}
\end{align}
Denote $\M{A} =\begin{bmatrix}\alpha_{00}&\alpha_{01}\\\alpha_{10} & \alpha_{11}\end{bmatrix}$, and perform singular value decomposition on $\M{A}$
\begin{align*}
\M{A} = \M{U}\M{D}\M{V}
\end{align*}
where  $\M{U}$, $\M{V}$ are unitary matrices and $\M{D}$ is a diagonal matrix. Since $|\phi\rangle$ is a maximally entangled state of two qubits, all the singular values of $\M{A}$ must be $\frac{1}{\sqrt{2}}$.
Hence, $\M{D}=\frac{1}{\sqrt{2}}\mathbb{I}_2$, and $\M{A}$ can be rewritten as
\begin{align}
\M{A} = \frac{1}{\sqrt{2}}\M{U}\M{V} = \frac{1}{\sqrt{2}}\tilde{\M{U}}\label{eqn:A}.
\end{align}
Since $\M{U}$, $\M{V}$ are unitary matrices, so is $\tilde{\M{U}}$. Denote
\begin{align}
\begin{bmatrix}|\tilde{s}\rangle& |\tilde{s}'\rangle\end{bmatrix}  =  \begin{bmatrix}|s\rangle& |s'\rangle\end{bmatrix}\tilde{\M{U}}\label{eqn:s}
\end{align}
then since $\tilde{\M{U}}$ is unitary,  $\{|\tilde{s}\rangle, |\tilde{s}'\rangle\}$ is also a basis of a qubit.
Substitue \eqref{eqn:A} and \eqref{eqn:s} into \eqref{eqn:phi}, one can obtain \eqref{eqn:schmidt}.
This completes the proof.
\endproof

The photon source generates photon pairs whose polarization state is maximally entangled, i.e., 
\begin{align*}
|\phi\rangle=\frac{1}{\sqrt{2}}(\vert h_{\mathrm{A}} \rangle \vert h_{\mathrm{B}} \rangle + e^{\imath \alpha} \vert v_{\mathrm{A}} \rangle \vert v_{\mathrm{B}}\rangle ).
\end{align*} 
From Lemma~\ref{lem:rotation}, there exists some basis $\{|\tilde{s}_{\mathrm{A}}\rangle,|\tilde{s}'_{\mathrm{A}}\rangle\}$ such that $|\phi\rangle$ can be rewritten as
\begin{align}
|\phi\rangle=\frac{1}{\sqrt{2}}(\vert \tilde{s}_{\mathrm{A}} \rangle \vert s_{\mathrm{B}} \rangle + \vert \tilde{s}'_{\mathrm{A}} \rangle \vert s_{\mathrm{B}} \rangle ).\label{eqn:pol_phi}
\end{align}
From \eqref{eqn:pol_phi}, the polarization state prepared by the source can be viewed as a state in which the polarization basis of photon B is already aligned with the \ac{PSP} basis of the channel.
Hence, rotating photon A to align $\{|\tilde{s}_{\mathrm{A}}\rangle,|\tilde{s}'_{\mathrm{A}}\rangle\}$ with the \ac{PSP} basis $\{|s_{\mathrm{A}}\rangle,|s'_{\mathrm{A}}\rangle\}$ is sufficient to reduce the possible coincident arrival times of the photon pair to two.

\section{Proof of Theorem~\ref{thm:UBF}}
\label{pf_thm:UBF}
The two network nodes perform the following local unitary operations
\begin{align}
\begin{split}
\M{U}_{\mathrm{A}}&=\frac{|0\rangle+|1\rangle}{\sqrt{2}}\langle a| + \frac{|0\rangle-|1\rangle}{\sqrt{2}}\langle a'|,\\
\M{U}_{\mathrm{B}}&=\frac{|0\rangle+|1\rangle}{\sqrt{2}}\langle b| + e^{-\imath\theta}\frac{|0\rangle-|1\rangle}{\sqrt{2}}\langle b'|
\end{split}\label{eqn:UAB}
\end{align}
on a pair of qubits with density matrix $\V{\rho}$. The updated density matrix is given by
\begin{align}
\check{\V\rho}&=(\M{U}_{\mathrm A}\otimes\M{U}_{\mathrm B})\,\V{\rho}\,(\M{U}_{\mathrm A}\otimes\M{U}_{\mathrm B})^\dag\nonumber 
\\&= 
F|\Phi^+\rangle\langle\Phi^+| + (1-F)|\Psi^+\rangle\langle\Psi^+|\label{eqn:rho_transformed}
\end{align}
where
\begin{align*}
|\Phi^+\rangle &= \frac{1}{\sqrt{2}} (|00\rangle + |11\rangle)\\
|\Psi^+\rangle &= \frac{1}{\sqrt{2}} (|01\rangle + |10\rangle)
\end{align*}

The density matrix in \eqref{eqn:rho_transformed} has the structure of the density matrix in \cite[Eq.(6)]{RuaDaiWin:J18}, with $\alpha = \beta=\gamma=\delta =\frac{1}{\sqrt{2}}$.
Therefore, one can adopt \cite[Thm. 2]{RuaDaiWin:J18} and get 
\begin{align*}
F^*(\check{\V\rho}) = \frac{F^2}{F^2+(1-F)^2}.
\end{align*}
Moreover, since unitary operations are reversible, $F^*(\check{\V\rho})=F^*(\V\rho)$. This completes the proof.

\section{Proof of Theorem~\ref{thm:UBP}}
\label{pf_thm:UBP}
First prove that the proposed success probability is an upper bound, i.e.,
\begin{align}
P^*(\V{\rho}) \le F^2+(1-F)^2.
\end{align}
The statement will be proved by contradiction. 
Suppose the theorem does not hold, i.e., for some $\V\rho \in\mathcal{S}$ with $|R|>0$ there exists a distillation operation $\mathcal{D}$ such that 
\begin{align}
F_{\mathcal{D}}(\V{\rho}) &= \frac{F^2}{F^2+(1-F)^2}\label{eqn:FDopt}\\
P_{\mathcal{D}}(\V{\rho}) &> F^2+(1-F)^2 \label{eqn:PDoptplus}.
\end{align}

From \eqref{eqn:generalrho_2}, the spectrum decomposition of the joint density matrix of two qubit pairs is given by
\begin{align*}
\V{\rho}^{\mathrm{J}}&= F^2|\phi_1\phi_1\rangle\langle\phi_1\phi_1| + F(1-F) |\phi_1\phi_2\rangle\langle\phi_1\phi_2| \\ 
&\hspace{4.3mm}+(1-F)F |\phi_2\phi_1\rangle\langle\phi_2\phi_1| + (1-F)^2|\phi_2\phi_2\rangle\langle\phi_2\phi_2|
\end{align*}
Define
\begin{align*}
\M{V}_{nm} &= \mathrm{tr}_{3,4}\{\mathcal{D}\{|\phi_n\phi_m\rangle\langle\phi_n\phi_m|\}\}\\
f_{nm} & = \langle \Phi^+| \M{V}_{nm}  |\Phi^+\rangle \\
p_{nm} & = \mathrm{tr}\{\M{V}_{nm}\}
\end{align*}
where $n,m\in\{1,2\}$. 
As along as $\mathcal{D}$ is a valid quantum operation, $\M{V}_{nm}$ must be a positive semidefinite matrix  with trace no greater than 1.
Therefore,
\begin{align}
0\le f_{nm} \le p_{nm} \le 1. \label{eqn:fp}
\end{align}   
It is straight forward that
\begin{align}
F_{\mathcal{D}}(\V{\rho}) &= \frac{F^2f_{11}+F(1-F)(f_{12} + f_{21}) +(1-F)^2f_{22}}{F^2p_{11}+F(1-F)(p_{12} + p_{21}) +(1-F)^2p_{22}}\label{eqn:FD}\\
P_{\mathcal{D}}(\V{\rho}) &= F^2p_{11}+F(1-F)(p_{12} + p_{21}) +(1-F)^2p_{22} \label{eqn:PD}.
\end{align}
Combining \eqref{eqn:PDoptplus} and \eqref{eqn:PD}, and noticing that $p_{nm} \le 1$, it can be derived that
\begin{align}
p_{12} + p_{21} >0\label{eqn:nonzerop}
\end{align}

Denote
\begin{align*}
S(F) & = F^2f_{11}+F(1-F)(f_{12} + f_{21}) +(1-F)^2f_{22}\\
N(F) & = F^2(p_{11}-f_{11})+F(1-F)(p_{12} + p_{21}-f_{12} - f_{21}) 
\\&\hspace{4.3mm}+(1-F)^2(p_{22}-f_{22})
\end{align*} 
Then from \eqref{eqn:FDopt} and \eqref{eqn:FD}
\begin{align}
&F_{\mathcal{D}}(\V{\rho})=\frac{S(F)}{S(F)+N(F)}=\frac{F^2}{F^2+(1-F)^2}\nonumber\\
\Rightarrow& \frac{N(F)}{S(F)} = \frac{(1-F)^2}{F^2}\label{eqn:SFratio} 
\end{align}

$F>\frac{1}{2}$ as $|R|>0$. Hence, one can construct another density matrix $\tilde{\V{\rho}}$ satisfying \eqref{eqn:generalrho_2}, with a different 
$\tilde{F}\in(\frac{1}{2},F)$. By repeating the analysis above, it can be derived that
\begin{align}
F_{\mathcal{D}}(\tilde{\V{\rho}})=\frac{S(\tilde{F})}{S(\tilde{F})+N(\tilde{F})}=\frac{1}{1+\frac{N(\tilde{F})}{S(\tilde{F})}}\label{eqn:FDtilde} 
\end{align}
From \eqref{eqn:fp} and \eqref{eqn:nonzerop}, if $f_{12} + f_{21}=p_{12} + p_{21} >0$, then
\begin{align}
S(\tilde{F}) &=  \frac{\tilde{F}^2}{F^2}\Big(F^2f_{11}+ \frac{F^2}{\tilde{F}}(1-\tilde{F})(f_{12} + f_{21}) \nonumber\\
                       & \hspace{4.3mm}+ \frac{F^2}{\tilde{F}^2}(1-\tilde{F})^2f_{22}\Big)\nonumber\\
                  &>  \frac{\tilde{F}^2}{F^2}\Big(F^2f_{11}+ F(1-F)(f_{12} + f_{21}) 
                        + (1-F)^2f_{22}\Big)\nonumber\\
                  &=  \frac{\tilde{F}^2}{F^2} S(F)\label{eqn:SFtilde}\\ 
N(\tilde{F}) &=  \frac{(1-\tilde{F})^2}{(1-F)^2}\Big(\frac{(1-F)^2}{(1-\tilde{F})^2}\tilde{F}^2(p_{11}-f_{11})\nonumber\\
                   &\hspace{4.3mm}+\tilde{F}\frac{(1-F)^2}{(1-\tilde{F})}(p_{12} + p_{21}-f_{12} - f_{21})\nonumber\\ 
                   &\hspace{4.3mm}+(1-F)^2(p_{22}-f_{22})\Big)\nonumber\\
                  &\le\frac{(1-\tilde{F})^2}{(1-F)^2}\Big(F^2(p_{11}-f_{11})\nonumber\\
                   &\hspace{4.3mm}+F(1-F)(p_{12} + p_{21}-f_{12} - f_{21}) \nonumber\\
                   &\hspace{4.3mm}+(1-F)^2(p_{22}-f_{22})\Big)\nonumber \\
                   & =\frac{(1-\tilde{F})^2}{(1-F)^2}N(F) \label{eqn:NFtilde}                                  
\end{align}

Substituting  \eqref{eqn:SFratio}, \eqref{eqn:SFtilde}, and \eqref{eqn:NFtilde} into \eqref{eqn:FDtilde}, one can get
\begin{align*}
F_{\mathcal{D}}(\tilde{\V{\rho}}) > \frac{\tilde{F}^2}{\tilde{F}^2+(1-\tilde{F})^2}
\end{align*}
which leads to 
\begin{align}
F^*(\tilde{\V{\rho}})\ge F_{\mathcal{D}}(\tilde{\V{\rho}}) > \frac{\tilde{F}^2}{\tilde{F}^2+(1-\tilde{F})^2}.\label{eqn:Ftildeoptplus}
\end{align}
However, \eqref{eqn:Ftildeoptplus} contradicts with \eqref{eqn:Fidelity_Upper}.

Otherwise, if $p_{12} + p_{21}>f_{12} + f_{21}\ge 0$, one can use similar analysis and get
\begin{align*}
S(\tilde{F}) &\ge  \frac{\tilde{F}^2}{F^2} S(F)\\ 
N(\tilde{F}) &<  \frac{(1-\tilde{F})^2}{(1-F)^2}N(F)                                 
\end{align*}
which also lead to a contradiction between \eqref{eqn:Ftildeoptplus} and \eqref{eqn:Fidelity_Upper}. 
This contradiction shows that success probability given in \eqref{eqn:Prob_Upper} is indeed an upper bound.

The achievability of  \eqref{eqn:Prob_Upper} will be proved constructively with the \ac{QED} algorithm to be proposed. Please refer to Section~\ref{sec:alg} for details.

\section{Proof of Theorem~\ref{thm:dis_perf}}
\label{pf_thm:dis_perf}

From \eqref{eqn:rho_transformed}, after the first step of the algorithm, the joint density matrix of two qubit pairs is given by
\begin{align*}
{\V\rho}_{\mathrm{J}}&=\M{P}\check{\V\rho}\otimes\check{\V\rho}\,\M{P}^\dag\\
&=F^2|\Omega^{(1)}\rangle\langle\Omega^{(1)}| +
F(1-F)\big(|\Omega^{(2)}\rangle\langle\Omega^{(2)}|
+ |\Omega^{(3)}\rangle\langle\Omega^{(3)}|\big)\\
&\hspace{3.7mm}+ (1-F)^2|\Omega^{(4)}\rangle\langle\Omega^{(4)}|
\end{align*}
where $\M{P}$ is the permutation operator that switches the second and third qubits, and
\begin{align}
\begin{array}{*{8}{r@{\;}}r@{}r}
|\Omega^{(1)}\rangle&=&&
\frac{1}{2}|0000 \rangle&+&
\frac{1}{2}|0101 \rangle
\vspace{1mm}\\
&&+&
\frac{1}{2}|1010 \rangle&+&
\frac{1}{2}|1111 \rangle
\vspace{1mm}\\
|\Omega^{(2)}\rangle&=&&
\frac{1}{2}|0001 \rangle&+&
\frac{1}{2}|0100 \rangle
\vspace{1mm}\\
&&+&
\frac{1}{2}|1011 \rangle&+&
\frac{1}{2}|1110\rangle
\vspace{1mm}\\
|\Omega^{(3)}\rangle&=&&
\frac{1}{2}|0010 \rangle&+&
\frac{1}{2}|0111\rangle
\vspace{1mm}\\
&&+&
\frac{1}{2}|1000\rangle&+&
\frac{1}{2}|1101\rangle
\vspace{1mm}\\
|\Omega^{(4)}\rangle&=&&
\frac{1}{2}|0011 \rangle&+&
\frac{1}{2}|0110\rangle
\vspace{1mm}\\
&&+&
\frac{1}{2}|1001\rangle&+&
\frac{1}{2}|1100\rangle&&.
\end{array}
\nonumber
\end{align}

In the first round of distillation, after both nodes perform the CNOT operation, the joint density matrix of two qubit pairs becomes
\begin{align}
\check{\V\rho}_{\mathrm{J}}
&=F^2|\check\Omega^{(1)}\rangle\langle\check\Omega^{(1)}| +
F(1-F)\big(|\check\Omega^{(2)}\rangle\langle\check\Omega^{(2)}|
\nonumber
\\&\hspace{4.3mm}+ |\check\Omega^{(3)}\rangle\langle\check\Omega^{(3)}|\big)
+ (1-F)^2|\check\Omega^{(4)}\rangle\langle\check\Omega^{(4)}|\label{eqn:jointBCNOT}
\end{align}
where\begin{align}
\begin{array}{*{8}{r@{\;}}r@{}r}
|\check\Omega^{(1)}\rangle&=&&
\frac{1}{2}|0000\rangle&+&
\frac{1}{2}|0101\rangle
\vspace{1mm}\\
&&+&
\frac{1}{2}|1111\rangle&+&
\frac{1}{2}|1010 \rangle\vspace{1mm}
\\
|\check\Omega^{(2)}\rangle&=&&
\frac{1}{2}|0001\rangle&+&
\frac{1}{2}|0100\rangle
\vspace{1mm}\\
&&+&
\frac{1}{2}|1110\rangle&+&
\frac{1}{2}|1011\rangle
\vspace{1mm}\\
|\check\Omega^{(3)}\rangle&&=&
\frac{1}{2}|0011\rangle&+&
\frac{1}{2}|0110\rangle
\vspace{1mm}\\
&&+&
\frac{1}{2}|1100\rangle&+&
\frac{1}{2}|1001\rangle\vspace{1mm}\\
|\check\Omega^{(4)}\rangle&&=&
\frac{1}{2}|0010\rangle&+&
\frac{1}{2}|0111\rangle
\vspace{1mm}\\
&&+&
\frac{1}{2}|1101 \rangle&+&
\frac{1}{2}|1000\rangle&.
\end{array}
\nonumber
\end{align}

From \eqref{eqn:jointBCNOT}, 
if both measurement results correspond to $|0\rangle\langle 0|$, the (unnormalized) density matrix of the source qubit pair is given by
\begin{align}
\V{\rho}_{00}&=(\mathbb{I}_2\otimes\langle 0|\otimes\mathbb{I}_2\otimes\langle 0
|)\,\check{\V\rho}_{\mathrm{J}}\,(\mathbb{I}_2\otimes|0\rangle\otimes\mathbb{I}_2\otimes|0\rangle)\nonumber
\\&=\frac{1}{2}\big(F^2|\Phi^+\rangle\langle\Phi^+| + (1-F)^2|\Psi^+\rangle\langle\Psi^+|  \big).
\label{eqn:case00}
\end{align}
Similarly, if both measurement results correspond to $|1\rangle\langle1|$, the (unnormalized) density matrix of the source qubit pair is given by
\begin{align}
\V\rho_{11}&=(\mathbb{I}_2\otimes\langle 1|\otimes\mathbb{I}_2\otimes\langle 1
|)\,\check{\V\rho}_{\mathrm{J}}\,(\mathbb{I}_2\otimes|1\rangle\otimes\mathbb{I}_2\otimes|1\rangle)\nonumber
\\&=\frac{1}{2}\big(F^2|\Phi^+\rangle\langle\Phi^+| + (1-F)^2|\Psi^+\rangle\langle\Psi^+|  \big).\label{eqn:case11}
\end{align}

From \eqref{eqn:case00}, and \eqref{eqn:case11}, the probability of preserving the source qubit pair is
\begin{align}
P=\mathrm{tr}\{\V\rho_{00}+\V\rho_{11}\}=F^2+(1-F)^2\label{eqn:P}
\end{align}
the fidelity of the kept qubit pairs is
\begin{align}
F_1=\frac{\frac{1}{2}F^2+\frac{1}{2}F^2}{P}=\frac{F^2}{F^2+(1-F)^2}\label{eqn:F1}
\end{align}
and the density matrix of the kept qubit pair can be written as
\begin{align}
\V\rho^{(1)}=\frac{\V\rho_{00}+\V\rho_{11}}{P}=F_1|\Phi^+\rangle\langle\Phi^+| +(1-F_1)|\Psi^+\rangle\langle\Psi^+|.\label{eqn:rho1}
\end{align}

With \eqref{eqn:P} and \eqref{eqn:F1}, the proof for the first round of distillation is complete.
For the following rounds of distillations, one can take \eqref{eqn:rho1} as input, and repeat the analysis in \eqref{eqn:jointBCNOT}--\eqref{eqn:F1}. 
This competes the proof.

\end{document}